\documentclass{article}
\usepackage{setspace}
\usepackage{PRIMEarxiv}
\usepackage[utf8]{inputenc} 
\usepackage[T1]{fontenc}    
\usepackage{url}            
\usepackage{booktabs}       
\usepackage{amsfonts}       
\usepackage{nicefrac}       
\usepackage{microtype}      
\usepackage{lipsum}
\usepackage{fancyhdr}       
\usepackage{graphicx}       
\usepackage{bm}
\usepackage{multirow}
\usepackage{adjustbox}
\usepackage{float}
\usepackage{parskip}
\usepackage{algorithm,algpseudocode}
\usepackage{amsthm}
\usepackage{soul}
\usepackage{comment}
\usepackage{subcaption}

\newtheorem{theorem}{Theorem}[section]

\newtheorem{corollary}[theorem]{Corollary}

\newcommand{\bs}{\mathbf{s}}

\graphicspath{{media/}}     
\RequirePackage{amsthm,amsmath,amsfonts,amssymb}
\RequirePackage[round,authoryear]{natbib}
\RequirePackage[colorlinks,citecolor=blue,urlcolor=blue,linkcolor=blue]{hyperref}

\title{Making Recursive Bayesian Inference Robust}

\author{
  Myungsoo Yoo\thanks{Author of correspondence. email: myungsoo.yoo@austin.utexas.edu} \\
  The University of Texas at Austin \\
   \And
  Daniel W\"urzler Barreto\\
  The University of Texas at Austin \\
   \And
  Mevin B. Hooten\\
  The University of Texas at Austin \\
}

\begin{document}
\maketitle

\begin{abstract}
While Bayesian inference has become increasingly popular with advances in computational resources, its algorithms can be computationally prohibitive and may not scale with large datasets. This has led to growing interest in alternative algorithms, such as approximation methods and variants of Markov chain Monte Carlo. Among these approaches, prior proposal–recursive Bayesian (PP-RB) inference facilitates scalable Bayesian computation by recursively updating the posterior distribution across stages and utilizing parallel computing resources. While the well-known ``degeneracy'' issue in PP-RB has been studied, another limitation that PP-RB can yield incorrect inferences when posterior distributions shift substantially between stages has remained unsolved. To address this, we propose parallel-tempered prior proposal-recursive Bayesian (PPP-RB) inference, which extends PP-RB by leveraging the key idea underlying Metropolis-coupled Markov chain Monte Carlo. We show both theoretically and empirically that PPP-RB targets the true posterior distribution. We illustrate PPP-RB through numerical studies and real data analysis in application to earthquake count data and sea surface salinity in the North Atlantic region. In these applications, we compare PPP-RB with PP-RB and a standard MCMC, demonstrating that PPP-RB is more efficient in terms of effective sample size per elapsed time.
\end{abstract}

\keywords{Metropolis coupled MCMC \and Parallel tempering \and Prior proposal recursive Bayesian inference \and Recursive Bayes}


\section{Introduction}\label{intro}
With the advancement of computational resources, Bayesian methods have gained considerable attention \citep{gelman1995bayesian}. In particular, Bayesian hierarchical models \citep[BHMs;][]{berliner1996hierarchical}, which represent each substage (i.e., data, process, and parameter models) via conditional distributions, offer substantial flexibility for modeling dependent data and have become one of the most widely used modeling frameworks \citep{wikle1998hierarchical, wikle2003hierarchical,dunson2009bayesian, johnson2022greater}. Despite their flexibility, computational methods such as Markov Chain Monte Carlo \citep[MCMC;][]{gelfand1990sampling} often struggle to scale with large datasets and can face significant challenges in terms of mixing and convergence \citep{ green2015bayesian,robert2018accelerating,nemeth2021stochastic}. 

One approach to facilitating Bayesian inference for large datasets is the use of approximation methods, which provide scalable alternatives to traditional MCMC algorithms. These include variational inference \citep[VI;][]{jordan1999introduction}, integrated nested Laplace approximations \citep[INLA;][]{rue2009approximate}, and neural posterior estimation \citep{papamakarios2016fast}, among others. However, these approaches only allow us to approximate the posterior distribution. For example, VI often underestimates posterior variance \citep{blei2017variational}. Rather than approximating the target posterior distribution, certain methods are designed to preserve the correct target distribution while improving computational efficiency. Examples include delayed acceptance MCMC \citep{christen2005markov}, pseudo-marginal MCMC \citep{andrieu2009}, stochastic gradient MCMC \citep{welling2011bayesian}, and subsampling-based MCMC \citep{bradley2021approach,saha2025incorporating}.

Among these methods, prior proposal–recursive Bayesian (PP-RB) inference \citep{hooten2021making} provides a principled framework for scalable Bayesian computation that leverages parallel computing resources while preserving the exact target posterior distribution. The key idea is to partition the data and update the posterior distribution in multiple stages by sequentially incorporating subsets of the data using a standard Metropolis–Hastings (MH) or importance sampling (IS) algorithm \citep{metropolis1953equation,hastings1970monte,kloek1978bayesian}. Notably, posterior samples from the previous stage are randomly selected as proposals for the MH algorithm, thereby allowing the precomputation of the MH acceptance ratio between stages using parallel computing resources and substantially improving the computational efficiency. 
PP-RB and its extensions have been utilized across diverse applications, including ecology \citep{mccaslin2021hierarchical,feuka2022individual,ren2026multi}, optimal design settings in ecological science \citep{leach2022recursive}, hydrology \citep{hepler2025two}, and environmental science \citep{hooten2021making, barreto2025recursive}.

Similar to IS and sequential Monte Carlo, PP-RB can suffer from sample ``degeneracy,'' whereby the effective sample size decreases substantially in subsequent stages. This occurs because only a subset of proposals from the previous stage are accepted in the MH algorithm at the current stage, and these can be further rejected in subsequent stages, thereby reducing the number of available proposals. Several approaches have been proposed to address the degeneracy issue in PP-RB and its extensions. For example, \citet{taylor2025generative} extended PP-RB by applying the transition kernel to posterior samples from the previous stage. \citet{scharf2025strategy} suggested drawing proposals from a smoothed continuous distribution that closely approximated the posterior distribution from the previous stage. In the framework of importance sampling combined with PP-RB, \citet{barreto2025recursive} proposed an algorithm that recursively updates importance sampling weights and introduced sample replenishment steps at an optimal rate.

Beyond the degeneracy issue, PP-RB inherently assumes that the posterior distributions at each stage are located near the final posterior, meaning that the posterior at the current stage does not shift substantially from that at the next stage. However, the posterior distribution can shift in practice, especially when newly introduced data (or partitions) deviate from the previous partition. In such cases, proposals from the previous stage are no longer representative of the current posterior. This posterior shift is distinct from, yet related to, the degeneracy issue because posterior shifts can cause degeneracy but not all degeneracy is caused by posterior shifts. This aspect of PP-RB has received little attention and remains largely unaddressed.

To address this limitation and enable more robust posterior inference, we propose a new framework: the Parallel-tempered Prior Proposal–Recursive Bayesian (PPP-RB) inference framework. PPP-RB extends the PP-RB framework by incorporating a key idea from Metropolis-coupled Markov chain Monte Carlo \citep[MCMCMC;][]{geyer1991mcmc}. MCMCMC enables exploration of a broader parameter space by tempering the posterior distribution. It is particularly effective for exploring multimodal posterior distributions and can improve mixing \citep{gilks1996strategies}. We leverage this strength to address posterior shifts across stages. 

\begin{figure}[t]
\centering
\includegraphics[width=1\linewidth, trim=0 10 0 0, clip]{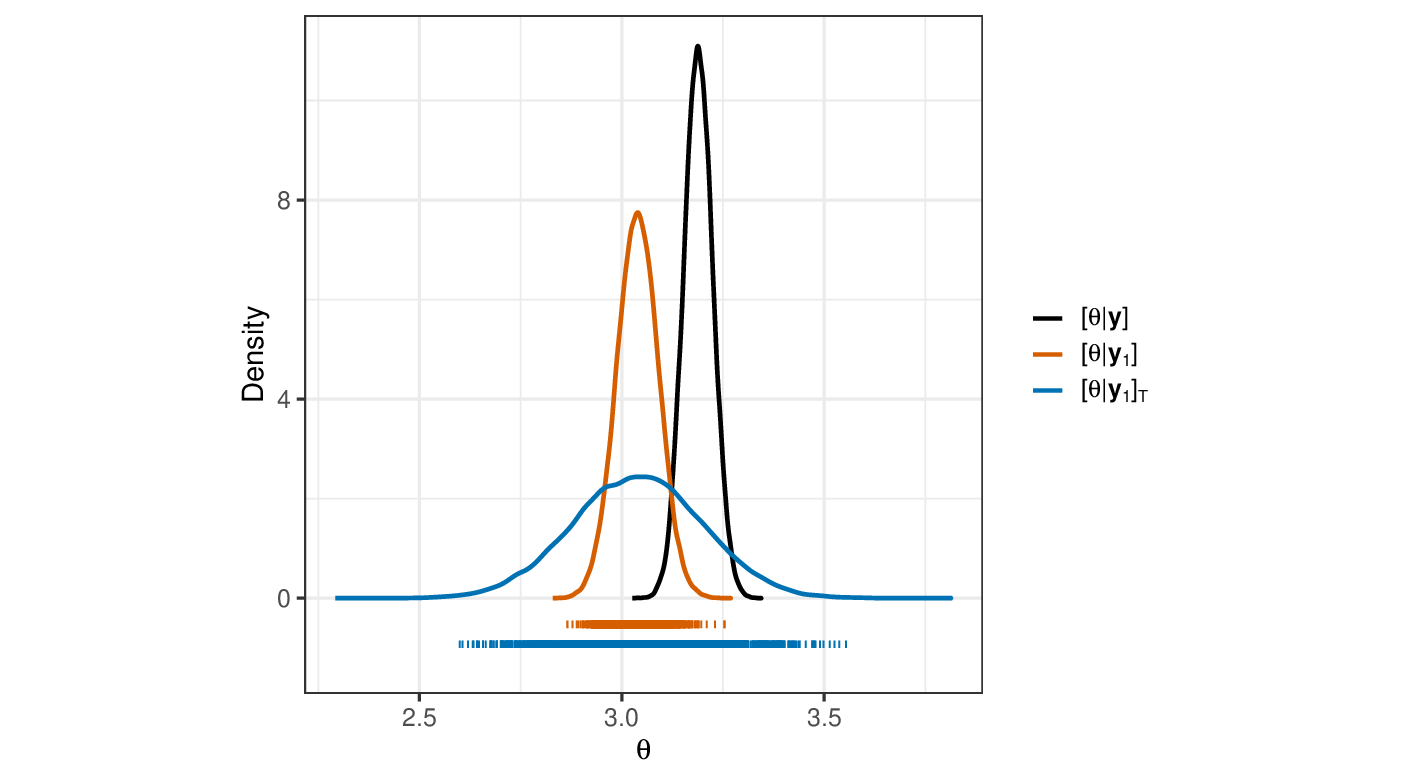}
\caption{Effect of tempering on the posterior distribution. Black, orange, and blue curves show $[\theta \mid\bm{y}]$, $[\theta \mid\bm{y}_1]$, and the tempered posterior $[\theta \mid\bm{y}_1]_{\tau} \propto [\bm{y}_1 \mid\theta]^{1/\tau}[\theta]$, respectively. Tick marks below the density indicate 2,000 proposals randomly sampled from $[\theta \mid\bm{y}_1]$ and $[\theta \mid\bm{y}_1]_{\tau}$ respectively, for illustration purposes.}
\label{teaser}
\end{figure}

To illustrate the effect of tempering, consider the model $[\bm{y}\mid\theta]$, where $``[\cdot]''$ and $``[\cdot \mid\cdot]''$ denote the probability distribution and conditional probability distribution, respectively \citep{gelfand1990sampling}, with data $\bm{y}=(\bm{y}_1',\bm{y}_2')^{'}$ and parameter $\theta$. In the first stage of PP-RB, fitting the model to $\bm{y}_1$ alone yields $[\theta \mid\bm{y}_1]$. When this deviates substantially from the full posterior $[\theta \mid\bm{y}]$, most proposals drawn from $[\theta \mid \bm{y}_1]$ are poor proposals for the MH algorithm at the next stage. In contrast, the tempered posterior $[\theta \mid\bm{y}_1]_{\tau} \propto [\bm{y}_1 \mid \theta]^{1/\tau}[\theta]$ alleviates this by flattening the distribution, producing more representative proposals (Figure \ref{teaser}).

The main contributions of this work are as follows. First, PPP-RB enables more robust inference by utilizing MCMCMC. Compared to PP-RB, PPP-RB explores a broader region of the parameter space, allowing proposals to better adapt to changes in the posterior distribution across stages. Second, we empirically demonstrate that PPP-RB is computationally more efficient than the MH algorithm and yields a greater effective sample size per unit time. Third, we theoretically show that PPP-RB targets the correct posterior distribution. We compare PPP-RB, PP-RB, and the MH algorithm using simulated data and two real datasets: earthquake counts and sea surface salinity.


\section{Background}
\label{background}
\subsection{Prior Proposal-Recursive Bayesian Inference}
\label{pprb}
Instead of fitting a model to a full dataset all at once, recursive Bayesian inference, which is also known as Bayesian filtering or recursive Bayes, involves fitting a model in a series of steps \citep{sarkka2023bayesian}. In that sense, recursive Bayesian inference may be better understood in settings where data are collected over time (i.e., in an online setting). To see this, consider a dataset partitioned into $\bm{y}_1$ and $\bm{y}_2$ (i.e., $\bm{y}=(\bm{y}_1',\bm{y}_2')^{'}$) with associated parameter $\bm{\theta}$ in the model for $\bm{y}$. By Bayes' theorem, the posterior distribution of $\bm{\theta}$ conditioned on $\bm{y}$ is:
\begin{align}
\label{bayes_prior}
[\bm{\theta} \mid \bm{y}_1,\bm{y}_2] &\propto [\bm{y}_1,\bm{y}_2 \mid \bm{\theta}] [\bm{\theta}] \nonumber \\
&\propto [\bm{y}_2 \mid \bm{y}_1 ,\bm{\theta}] [\bm{\theta} \mid\bm{y}_1 ], \end{align}
where $[\bm{\theta}]$ and $[\bm{\theta} \mid \bm{y}_1]$ denote the prior distribution of $\bm{\theta}$ and posterior distributions of $\bm{\theta}$ conditioned on $\bm{y}_1$. The posterior distribution of $\bm{\theta}$ can thus be factored into two components: the likelihood of $\bm{y}_2$ conditioned on $\bm{y}_1$ and $\bm{\theta}$, and the posterior distribution of $\bm{\theta}$ conditioned only on $\bm{y}_1$. This indicates that the posterior distribution $[\bm{\theta} \mid\bm{y}_1,\bm{y}_2]$ can be derived recursively by first fitting the model to $\bm{y}_1$ to obtain $[\bm{\theta} \mid\bm{y}_1]$ and then updating this distribution using the information contained in $\bm{y}_2$. Therefore, recursive Bayesian inference eliminates the need to refit the model to the full dataset $\bm{y}$ when new data $\bm{y}_2$ become available, enabling Bayesian inference to be performed sequentially as data arrive in an online setting and making the algorithm more computationally efficient and scalable.

Recursive Bayesian inference is not limited to online settings. In fact, a benefit of recursive Bayesian inference lies in its efficiency and scalability, which are often required for large datasets. One such approach is PP-RB \citep{hooten2021making}, which is based on Bayes' theorem in Equation \eqref{bayes_prior} together with the two-stage algorithm of \citet{lunn2013fully}. In this framework, the posterior samples obtained in the previous stage are recursively used in the current stage. For illustration, consider a dataset that is partitioned into $J$ subsets, $\bm{y}=(\bm{y}_1',...,\bm{y}_J')^{'}$. Given $[\bm{\theta} \mid \bm{y}_1]$, obtained by fitting the model to $\bm{y}_1$, PP-RB updates the posterior distribution recursively over the remaining $(J-1)$ stages. As in Equation \eqref{bayes_prior}, the posterior distribution of $\bm{\theta}$ at the $j$th stage is updated as
\[
[\bm{\theta} \mid\bm{y}_{1:j}] \propto [\bm{y}_j \mid \bm{\theta} , \bm{y}_{1:(j-1)}] [ \bm{ \theta} \mid \bm{y}_{1:(j-1)}],
\]
where $\bm{y}_{1:(j-1)}=(\bm{y}_1',...,\bm{y}_{j-1}')^{'}$. To update $\bm{\theta}$ at the $k$th iteration in the $j$th stage, the MH algorithm is used with acceptance ratio given by $\min(1,r_j)$, where
\begin{align}
r_j&=\frac{[ \bm{y}_j \mid\bm{\theta}^*, \bm{y}_{1:(j-1)}][\bm{\theta}^*\mid \bm{y}_{1:(j-1)}] [\bm{\theta}^{(k-1)}\mid \bm{y}_{1:(j-1)}] }{ [ \bm{y}_j \mid\bm{\theta}^{(k-1)}, \bm{y}_{1:(j-1)}][\bm{\theta}^{(k-1)}\mid \bm{y}_{1:(j-1)}] [\bm{\theta}^{*}\mid \bm{y}_{1:(j-1)}]} \nonumber \\
&=\frac{[ \bm{y}_j \mid\bm{\theta}^*, \bm{y}_{1:(j-1)}] }{ [ \bm{y}_j \mid\bm{\theta}^{(k-1)}, \bm{y}_{1:(j-1)}]} \label{pprb_mh}.
\end{align}
We let $\bm{\theta}^{(k-1)}$ and $\bm{\theta}^*$ denote the posterior sample of $\bm{\theta}$ at the $(k-1)$th iteration and a proposal drawn from $[\bm{\theta} \mid\bm{y}_{1:(j-1)}]$, respectively. Note that the posterior distribution obtained at the $(j-1)$th stage is used as the proposal distribution in the MH algorithm, which simplifies $r_j$. In practice, posterior samples from $[\bm{\theta} \mid\bm{y}_{1:(j-1)}]$ are randomly drawn with replacement and used as proposals \citep{lunn2013fully, hooten2021making}. Consequently, $r_j$ can be precomputed for each $\bm{\theta}^*$ from $[\bm{\theta} \mid \bm{y}_{1:(j-1)}]$, making the MH algorithm at the subsequent stages efficient. Importantly, parallel computational resources can be leveraged to further improve the efficiency of the algorithm, making PP-RB useful when fitting a model to the full dataset is computationally infeasible. For example, in large spatial datasets, standard Gaussian process models become infeasible due to their $\mathcal{O}(N^3)$ complexity, where $N$ denotes the number of observations. In contrast, PP-RB reduces the cost to $\mathcal{O}(n_1^3)+\frac{1}{R}\sum_{j=2}^{J} \mathcal{O}(n_j^3)$ by operating on data partitions, where $n_j$ denotes the number of observations in the $j$th partition and $R$ denotes the number of parallel computing cores.


\subsection{Power-Tempered PP-RB}
\label{tempered_pprb}
One possible approach to addressing the limitation of PP-RB (i.e., posterior shifts across stages) is to ``temper'' the posterior distribution from the previous stage so that the proposals for the current stage explore a broader region of the parameter space. For illustration, consider two stage PP-RB. The posterior distribution in the first stage is tempered as $[\bm{\theta} \mid \bm{y}_{1}]_{\tau} \propto [\bm{y}_1 \mid \bm{\theta}]^{1/\tau} [\bm{\theta}]$, where $\tau>1$ is the temperature parameter. In the second stage, proposals from $[\bm{\theta} \mid\bm{y}_1]_{\tau}$ at the $k$th iteration are accepted according to the MH acceptance ratio given by $\min (1,r)$, where 
\[
r=\frac{ [\bm{y}_2 \mid \bm{y}_1 ,\bm{\theta}^*] [ \bm{y}_1 \mid \bm{\theta}^*]^{1-1/\tau} }{ [\bm{y}_2 \mid \bm{y}_1 ,\bm{\theta}^{(k-1)}] [ \bm{y}_1 \mid \bm{\theta}^{(k-1)}]^{1-1/\tau} }.
\] 

It is important to select an optimal temperature ${\tau}^*$ because $\tau$ controls the geometry of the tempered posterior distribution. One criterion for selecting $\tau^*$ is the $\chi^2-$divergence between $[\bm{\theta} \mid \bm{y}_{1:J}]$ and $[\bm{\theta} \mid \bm{y}_{1}]_{\tau}$, given as
\begin{align}
\label{chi_divege}
\tau^* = \arg \min_{\tau} D_{\chi^2} \left( [\bm{\theta} \mid \bm{y}_{1:J}] \,\middle\|\, [\bm{\theta} \mid \bm{y}_{1}]_{\tau} \right),
\end{align}
where $ D_{\chi^2} \left( [\bm{\theta} \mid \bm{y}_{1:J}] \,\middle\|\, [\bm{\theta} \mid \bm{y}_{1}]_{\tau} \right)=\int
\frac{
\left\{ [\bm{\theta} \mid \bm{y}_{1:J}] \right\}^{2}
}{
[\bm{\theta} \mid \bm{y}_{1}]_{\tau}
}
\, d\bm{\theta}
- 1$. This criterion is also related to finding $\tau^*$ such that it maximizes a rough approximation of the effective sample size \citep{liu1995blind,martino2017effective}. In other words, by increasing the ESS, the power-tempered PP-RB enables robust inference under posterior shifts.

For illustration, suppose the data are partitioned into $J$ subsets where  $\bm{y}_{ij} \mid \bm{\theta}\sim \text{Gau}(\bm{\theta},\bm{\Sigma})$ and $\bm{\theta}\sim \text{Gau}(\bm{\theta_0,\bm{\Sigma}_0})$ are assumed for $\bm{y}_{ij}\in \mathbb{R}^d$ with $i\in \{1,...,n_j\}$ and $j \in \{1,...,J\}$. Assuming that $\bm{\Sigma}$ is known, posterior distributions, $[\bm{\theta} \mid \bm{y}_1]_T =\text{Gau}(\bm{\theta}_{1,T},\bm{\Sigma}_{1,T})$ and $[\bm{\theta} \mid \bm{y}_{1:J}] =\text{Gau}(\bm{\theta}_{J},\bm{\Sigma}_{J})$, can be analytically obtained, where $\bm{\Sigma}_J=\bm{\Sigma}_0+n \bm{\Sigma}$, $\bm{\theta}_J = \bm{\Sigma}_J^{-1}(\bm{\Sigma}_0 \bm{\theta}_0 +n \bm{\Sigma} \bar{\bm{y}}_J)$, $\bm{\Sigma}_{1,T}=\bm{\Sigma}_0 + \frac{n_1}{T} \bm{\Sigma}$, and $\bm{\theta}_{1,T}= \bm{\Sigma}^{-1}_{1,T} (\bm{\Sigma}_0 \bm{\theta}_0+\frac{n_1}{T} \bm{\Sigma} \bar{\bm{y}}_1)$ where $n=\sum_{j=1}^J n_j$, $\bar{\bm{y}}_J = \frac{1}{n}\sum_{j=1}^{J}\sum_{i=1}^{n_j} \bm{y}_{ij}$, and $\bar{\bm{y}}_1 = \frac{1}{n_1} \sum_{i=1}^{n_1} \bm{y}_{i1}$. In this case, it can be analytically shown that 
\begin{align}
\label{optimal_t_star}
\tau^*= \frac{n_1}{n}\bigg \{\frac{3}{2} + \frac{1}{d} \text{tr}(\bm{S})- \frac{1}{2}\sqrt{1+4 \bigg(\big(\frac{1}{d}\text{tr}(\bm{S})+\frac{3}{2}\big)^2-\big(\frac{3}{2}\big)^2\bigg)} \bigg\}^{-1}.
\end{align}
Here, $\bm{S}= n \bm{\Sigma}(\bar{\bm{y}}_J-\bar{\bm{y}}_1)(\bar{\bm{y}}_J-\bar{\bm{y}}_1)'.$ Figure \ref{ESS} compares the empirical ESS values obtained using PP-RB with the theoretical ESS derived using the delta method. It can be seen that $\tau^* \neq 1$, where $\tau^*$ is obtained by Equation~\eqref{optimal_t_star}, yields the highest ESS, illustrating the advantage of tempering over the untempered case ($\tau = 1$). See the supplemental material for details, including the theoretical ESS derived using the delta method.

\begin{figure}[t]
\centering
\includegraphics[width=1\linewidth]{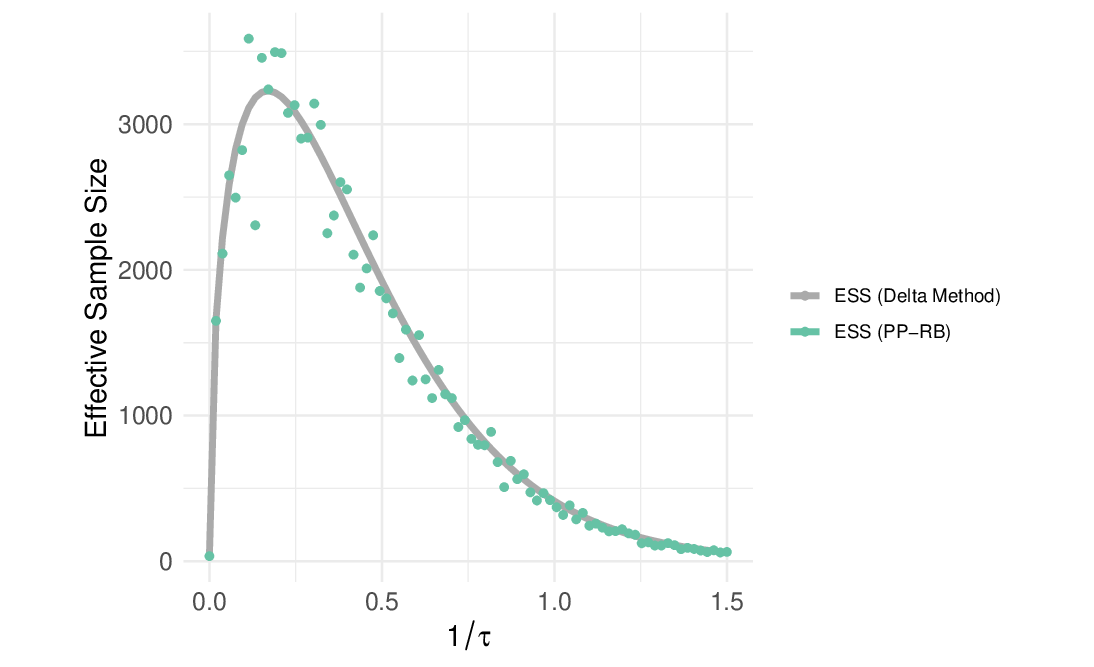}
\caption{ESS estimated from the PP-RB for each $1/\tau \in (0,1.5)$. The solid line shows the theoretical ESS obtained via the delta method.}
\label{ESS}
\end{figure}

While this example demonstrates the effect of tempering in addressing posterior shifts across stages, it is noteworthy that Equation \eqref{chi_divege} depends on $[\bm{\theta} \mid \bm{y}_{1:J}]$, which is typically not available except in certain special cases. This limitation motivates our proposed algorithm, PPP-RB, which is based on the MCMCMC framework described in Section \ref{subsection_tempering}.

\subsection{MCMCMC}
\label{subsection_tempering}
MCMCMC originated as the ``replica exchange'' simulation method \citep{swendsen1986replica}, later formalized by \citet{geyer1991mcmc}, and is now widely known as parallel tempering \citep{earl2005parallel}. MCMCMC can enhance mixing and is particularly useful for multimodal target distributions because it explores the parameter space more broadly \citep{geyer1991mcmc}. MCMCMC has been popular in Bayesian phylogenetic inference \citep[see, e.g.,][]{huelsenbeck2001mrbayes, altekar2004parallel, ronquist2012mrbayes, muller2020adaptive}. For a review of the use of parallel tempering as a simulation method across a variety of applications, see \citet{earl2005parallel}.

The premise of MCMCMC is to construct multiple tempered distributions by raising the target distribution to powers corresponding to different temperatures, which can be written as
\begin{align}
\label{tempered}
[\bm{\theta}\mid \bm{y}]_{\tau_{\ell}} \propto [\bm{y}\mid\bm{\theta}]^{1/ \tau_{\ell}} [\bm{\theta}], \quad \ell=1,...,L,
\end{align}
where $[\bm{\theta} \mid \bm{y}]_{\tau_{\ell}}$ denotes the tempered posterior distribution at temperature $\tau_{\ell}$, $[\bm{y} \mid \bm{\theta}]^{1/\tau_{\ell}}$ is the tempered likelihood, and $\tau_{\ell} \in [1, \tau_{\max}]$ is the temperature parameter. By raising the likelihood to a power corresponding to the temperature, the likelihood surface becomes flatter, thus facilitating exploration of the parameter space and movement between modes. Note that Equation \eqref{tempered} represents the original posterior distribution of $\bm{\theta}$ with $\tau_{\ell} = 1$. Among the $L$ tempered posterior distributions (i.e., $L$ chains), the one with $\tau_{\ell}=1$ is referred to as the ``cold'' chain, while those with $\tau_{\ell}>1$ are referred to as the ``hot'' chains. Without loss of generality, we hereafter assume that the first chain is the cold chain ($\tau_{1}=1$), while the remaining chains are hot chains ($1<\tau_{\ell}<\tau_{\ell+1}<\cdots <\tau_{L}$). With $\tau_{\ell}=1$, the cold chain is designed to be the Markov chain that targets the posterior distribution $[\bm{\theta} \mid\bm{y}]$. In contrast, hot chains play the role of exploring the parameter space more broadly than the cold chain and facilitating movement between local modes in the cold chain via state exchanges (or swaps) between hot and cold chains, thereby improving overall mixing.

More formally, MCMCMC consists of two main steps: (1) drawing posterior samples from $[\bm{\theta} \mid \bm{y}]_{\tau_{\ell}}$ for $\ell=1,\ldots,L$, and (2) exchanging posterior samples between the cold chain and one of the hot chains. Specifically, we draw posterior samples using an MCMC algorithm at each iteration for every chain, and at every $m$th iteration, a swap between the cold chain and the $\ell$th hot chain is proposed and accepted according to the MH acceptance ratio given by $\min(1, r_{\ell}^{\text{swap}})$, where
\begin{align*}
r_{\ell}^{\text{swap}}&=\frac{[\bm{y}\mid \bm{\theta}^{(m)}_{\text{hot}}][\bm{\theta}^{(m)}_{\text{hot}}]\cdot[\bm{y}\mid \bm{\theta}^{(m)}_{\text{cold}}]^{1/\tau_{\ell}}[\bm{\theta}^{(m)}_{\text{cold}}]}{[\bm{y}\mid \bm{\theta}^{(m)}_{\text{cold}}][\bm{\theta}^{(m)}_{\text{cold}}]\cdot[\bm{y}\mid \bm{\theta}^{(m)}_{\text{hot}}]^{1/\tau_{\ell}}[\bm{\theta}^{(m)}_{\text{hot}}]} \nonumber \\
&=\frac{[\bm{y}\mid \bm{\theta}^{(m)}_{\text{hot}}]\cdot[\bm{y}\mid \bm{\theta}^{(m)}_{\text{cold}}]^{1/\tau_{\ell}}}{[\bm{y}\mid \bm{\theta}^{(m)}_{\text{cold}}]\cdot[\bm{y}\mid \bm{\theta}^{(m)}_{\text{hot}}]^{1/\tau_{\ell}}}.
\end{align*}
The terms $\bm{\theta}_{\text{cold}}^{(m)}$ and $\bm{\theta}_{\text{hot}}^{(m)}$ denote the samples from the cold and hot chains at the $m$th MCMC iteration. By exchanging samples, the cold chain is more likely to move between local modes and escape local trapping, improving exploration of the posterior distribution. Notably, a drawback is that MCMCMC requires $L$ chains, but only one (i.e., the cold chain) is ultimately used for inference because the hot chains facilitate exploration rather than inference \citep{gilks1996strategies}. Nevertheless, recent advances in computing resources, such as parallel computing, have made MCMCMC more accessible and useful.

MCMCMC requires tuning. For example, one must determine the maximum temperature $\tau_{\text{max}}$, the number of chains $L$, the temperature schedule (or configuration) $\{\tau_{\ell}\}$ for ${\ell=1,...,L}$, and the swap frequency $m$ for exchanging posterior samples between chains. For the temperature schedule, the geometric temperature schedule (i.e., $\tau_{\ell}=\tau_0\cdot r^{\ell}$, $r>0$) is often used \citep{vousden2016dynamic}. However, more advanced approaches aim to determine an optimal configuration. For example, \citet{atchade2011towards} showed that the optimal temperature schedule should be determined such that the swap ratio between adjacent chains is 0.234. To achieve this target, various adaptive sampling frameworks have been developed, including the algorithms proposed by \citet{miasojedow2013adaptive}, \citet{vousden2016dynamic}, \citet{muller2020adaptive}, and \citet{zhao2024policy}. In their proposed algorithm, the swap acceptance ratio plays a key role, and they aim to adjust the temperature schedule to maintain swap acceptance ratio near 0.234. In practice, maintaining a swap acceptance rate between 0.2 and 0.4 is often used as a rule of thumb to ensure adequate mixing between chains. For a review, see \citet{earl2005parallel}.

While determining the optimal temperature schedule has been widely studied, the number of chains $L$, maximum temperature $\tau_{\text{max}}$, and swap frequency $m$ have received relatively less attention because these are difficult to select without sufficient prior knowledge of $[\bm{\theta} \mid \bm{y}]$. In practice, a large $\tau_{\text{max}}$ is typically recommended to ensure sufficient exploration of the target distribution \citep{earl2005parallel}, and $L$ is often tuned to achieve a swap acceptance rate close to 0.2. The swap frequency $m$ appears to be the least studied. More frequent swaps are generally preferable when $[\bm{\theta} \mid \bm{y}]$ exhibits strong multimodality, though at an increased computational cost. However, because the geometry of $[\bm{\theta} \mid \bm{y}]$ is generally unknown prior to model fitting, selecting $m$ in advance is an ongoing subject of research.

\section{PPP-RB}
\label{ppprb}
As introduced in Section \ref{intro}, PP-RB works well when the posterior distributions do not change substantially between stages. However, this condition is not always satisfied, and PP-RB may result in incorrect inference. By combining the key idea in PP-RB and MCMCMC, we propose PPP-RB which addresses the limitation of PP-RB and facilitates more robust inference. We first present the algorithm in Section \ref{ppprb_algo}, followed by theoretical results in Section \ref{ppprb_theory}, and practical considerations in Section \ref{ppprb_practical}.

\subsection{The algorithm}
\label{ppprb_algo}
As with PP-RB, PPP-RB assumes that the data are partitioned into $J$ subsets $\bm{y}=(\bm{y}_1',...,\bm{y}_J')^{'}$, and updates the posterior distribution recursively by fitting the model to each partition. While PP-RB uses a single MCMC chain, PPP-RB requires multiple chains, as in MCMCMC. PPP-RB can be viewed as a generalization of PP-RB, in the sense that it reduces to PP-RB when only a single chain is used and no swapping is performed. 

In the first stage, we fit the model to $\bm{y}_1$, as in PP-RB, but repeat this for each chain using different temperatures in parallel:
$[\bm{\theta} \mid \bm{y}_1]_{\tau_{\ell}}\propto[\bm{y}_1 \mid\bm{\theta}]^{1/\tau_{\ell}}[\bm{\theta}]$ for $\ell=1,...,L$. Because multiple chains are used, the computational cost for PPP-RB is more expensive than PP-RB. Nonetheless, obtaining multiple chains in the first stage can be done in parallel, resulting in computational time comparable to that of PP-RB.

For the remaining $J-1$ stages, we recursively update the posterior distribution for each chain, where each stage consists of two steps: 1) updating the posterior distribution of $\bm{\theta}$ using the data $\bm{y}_j$, and 2) swapping posterior samples between cold and hot chain. We term these steps ``within-chain update'' and ``between-chain exchange,'' respectively. For within-chain updates, we use the Bayes' theorem as in PP-RB:
\[
[\bm{\theta} \mid \bm{y}_{1:j}]_{\tau_{\ell}} \propto [\bm{y}_{j} \mid \bm{\theta}, \bm{y}_{1:(j-1)}]^{1/\tau_{\ell}} [\bm{\theta}\mid \bm{y}_{1:(j-1)}]_{\tau_{\ell}}, \quad j=2,...,J
\]
where $[\bm{y}_{j} \mid \bm{\theta},\bm{y}_{1:(j-1)}]^{1/\tau_{\ell}}$ denotes the tempered conditional likelihood of $\bm{y}_j$ conditioned on $\bm{\theta}$ and $\bm{y}_1,...,\bm{y}_{(j-1)}$. The MH acceptance ratio for the ${\ell}$th chain at the $j$th stage and $k$th MCMC iteration is given by $\min(1,r_{j,{\ell}})$, where
\begin{align}
r_{j,\ell}&=\frac{ [\bm{y}_j \mid \bm{\theta}^*, \bm{y}_{1:(j-1)}]^{1/\tau_{\ell}} [\bm{\theta}^*\mid\bm{y}_{1:(j-1)}]_{\tau_{\ell}} [\bm{\theta}^{(k-1)}\mid \bm{y}_{1:(j-1)}]_{\tau_{\ell}} }{ [\bm{y}_j \mid \bm{\theta}^{(k-1)}, \bm{y}_{1:(j-1)}]^{1/\tau_{\ell}} [\bm{\theta}^{(k-1)} \mid\bm{y}_{1:(j-1)}]_{\tau_{\ell}} [\bm{\theta}^{*}\mid \bm{y}_{1:(j-1)}]_{\tau_{\ell}}  } \nonumber \\
&=\frac{ [\bm{y}_j \mid \bm{\theta}^*, \bm{y}_{1:(j-1)}]^{1/\tau_{\ell}}  }{ [\bm{y}_j \mid \bm{\theta}^{(k-1)}, \bm{y}_{1:(j-1)}]^{1/\tau_{\ell}} } \label{ppprb_mh}.
\end{align}
In Equation \eqref{ppprb_mh}, $\bm{\theta}^*$ is a proposal randomly drawn (with replacement) from the posterior distribution of the previous stage. Note that the only difference between Equations \eqref{pprb_mh} and \eqref{ppprb_mh} is the temperature $\tau_{\ell}$ applied to $[\bm{y}_j \mid \bm{\theta}^{(k-1)}, \bm{y}_{1:(j-1)}]$. As in PP-RB, parallel computational resources can be used to precompute \eqref{ppprb_mh} for all $\bm{\theta}^*$ in the previous stage, thereby improving computational efficiency. 

After completing the within-chain update, PPP-RB performs the between-chain exchange every $m$ MCMC iterations. This step involves selecting a hot chain and determining whether to swap based on the MH acceptance ratio. The MH acceptance ratio for swapping between the 1st (cold) chain and the $\ell$th (hot) chain at the $j$th stage and $m$th MCMC iteration is given by $\min(1,r_{j,\ell}^{\text{swap}})$, where
\begin{align}
r_{j,\ell}^{\text{swap}}&=\frac{ [\bm{y}_j \mid \bm{y}_{1:(j-1)},\bm{\theta}_{\text{hot}}^{(m)}] [\bm{y}_{1:(j-1)} \mid \bm{\theta}_{\text{hot}}^{(m)}] [\bm{\theta}_{\text{hot}}^{(m)}]\cdot [\bm{y}_j \mid\bm{y}_{1:(j-1)},\bm{\theta}_{\text{cold}}^{(m)}]^{1/\tau_{\ell}} [\bm{y}_{1:(j-1)}\mid\bm{\theta}_{\text{cold}}^{(m)}]^{1/\tau_{\ell}} [\bm{\theta}_{\text{cold}}^{(m)}]}{   [\bm{y}_j \mid \bm{y}_{1:(j-1)},\bm{\theta}_{\text{cold}}^{(m)}] [\bm{y}_{1:(j-1)}\mid \bm{\theta}_{\text{cold}}^{(m)}] [\bm{\theta}_{\text{cold}}^{(m)}]\cdot [\bm{y}_j\mid\bm{y}_{1:(j-1)},\bm{\theta}_{\text{hot}}^{(m)}]^{1/\tau_{\ell}} [\bm{y}_{1:(j-1)}\mid\bm{\theta}_{\text{hot}}^{(m)}]^{1/\tau_{\ell}} [\bm{\theta}_{\text{hot}}^{(m)}] } \nonumber \\
&=\frac{ [\bm{y}_j \mid \bm{y}_{1:(j-1)},\bm{\theta}_{\text{hot}}]^{1-1/\tau_{\ell}} [\bm{y}_{1:(j-1)}\mid \bm{\theta}_{\text{hot}}^{(m)}]^{1-1/\tau_{\ell}}  }{   [\bm{y}_j \mid \bm{y}_{1:(j-1)},\bm{\theta}_{\text{cold}}^{(m)}]^{1-1/\tau_{\ell}} [\bm{y}_{1:(j-1)}\mid \bm{\theta}_{\text{cold}}^{(m)}]^{1-1/\tau_{\ell}}   } \label{swap_ratio}.
\end{align}
In Equation \eqref{swap_ratio}, $\bm{\theta}_{\text{cold}}^{(m)}$ and $\bm{\theta}_{\text{hot}}^{(m)}$ denote the posterior samples from the cold ($\ell=1$) and hot ($\ell>1$) chains at the $m$th MCMC iteration. Note that Equation \eqref{swap_ratio} can also be efficiently precomputed in parallel, thereby making both the within-chain update and between-chain exchange more efficient.

Compared to the tempered PP-RB in Section \ref{tempered_pprb}, PPP-RB is less sensitive to the choice of temperature because multiple chains at different temperatures explore the parameter space and exchange posterior samples. In contrast, tempered PP-RB is more sensitive to the choice of temperature because it relies on only a single temperature.

\begin{algorithm}[t]
\footnotesize
\begin{spacing}{0.95}
    \caption{Steps to implement the PPP-RB}
    \label{algo:1}
    \begin{algorithmic}[1]
        \State \textbf{Input} Partitioned data ${\bm{y}_1, \ldots, \bm{y}_J}$, temperatures $\tau_\ell$ for $\ell = 1, \ldots, L$ chains, and the number of MCMC iteration $K$.
        \State \textbf{for} $\ell=1:L$ \textbf{do} \label{algo:step2}
        \State \hskip1.0em Draw $K$ posterior samples via MCMC from 
        $[\bm{\theta}\mid \bm{y}_1]_{\tau_{\ell}} \propto [\bm{y}_1 \mid \bm{\theta}]^{1/\tau_{\ell}} [\bm{\theta}].$
        \label{algo:step3}
        \State \textbf{end for} \label{algo:step4}
         \State \textbf{for} $j=2:J$ \textbf{do} \label{algo:step6}
        \State \hskip1.0em Precompute the MH ratios for all samples from $[\bm{\theta}\mid \bm{y}_{1:(j-1)}]_{\tau_{\ell}}$ in Equations \eqref{ppprb_mh} and \eqref{swap_ratio}, for each $\ell = 1, \ldots, L$.
        \label{algo:step5}
        \State \hskip1.0em\textbf{for} $k=1:K$ \textbf{do} \label{algo:step7}
         \State \hskip2.0em\textbf{for} $\ell=1:L$ \textbf{do} \label{algo:step8}
        \State \hskip3.0em Draw posterior sample using $\bm{y}_j$ and the MH algorithm with the ratio in Equation \eqref{ppprb_mh}.
        \label{algo:step9}
         \State \hskip2.0em \textbf{end for} \label{algo:step10}
        \State \hskip2.0em At every $m$ iterations, propose swaps between the cold and hot chains and accept with probability given by Equation \eqref{swap_ratio}.  \label{algo:step11}
        \State \hskip1.0em \textbf{end for} \label{algo:step12}
        \State 
        \textbf{end for} \label{algo:step13}
        \State \textbf{output} posterior samples from the cold chain ($\ell = 1$) of $[\bm{\theta}\mid \bm{y}_{1:J}]$
        \label{algo:step14}
    \end{algorithmic}
    \end{spacing}
\end{algorithm}
Algorithm \ref{algo:1} describes the implementation of PPP-RB. In the first stage, $L$ chains with different temperature $\tau_{\ell}$ are fitted to $\bm{y}_1$ to obtain posterior samples from $[\bm{\theta} \mid\bm{y}_1]_{\tau_{\ell}}$ (Steps \ref{algo:step2}-\ref{algo:step4}). Before proceeding to the next stage, the MH acceptance ratios in Equations \eqref{ppprb_mh} and \eqref{swap_ratio} are precomputed for the next stage using parallel computing resource (Step \ref{algo:step5}). At subsequent stages ($j=2,...,J$), the within-chain update (Step \ref{algo:step9}) is performed, where the proposal $\bm{\theta}^*$ is randomly drawn from the posterior distribution obtained at the previous stage. Then, after selecting a hot chain for between-chain exchange, posterior samples between cold and hot chains are exchanged at every $m$ iterations according to the MH acceptance ratio in Equation \eqref{swap_ratio} (Step \ref{algo:step11}). Within-chain update and between-chain exchange are repeated for the remaining $J-1$ stages to obtain the posterior samples of $[\bm{\theta} \mid \bm{y}_{1:J}]$ from the cold chain. 

\subsection{Theoretical results}
\label{ppprb_theory}
In this section, we show that the cold chain of PPP-RB targets the true posterior distribution by verifying the detailed balance condition. A full proof is provided in the supplemental material.
\begin{corollary}
\label{corollary1}
In the first stage of PPP-RB, each Markov chain with temperatures $\tau_{\ell} \in [1,\tau_{\text{max}}]$ for $\ell=1,...,L$ has the corresponding tempered posterior distribution as its stationary distribution, given by
\[
[\bm{\theta}\mid\bm{y}_1]_{\tau_{\ell}}\propto [\bm{y}_1\mid \bm{\theta}]^{1/\tau_{\ell}} [\bm{\theta}],
\]
where $\bm{y}_1$ denotes the first partition of the data $\bm{y}=(\bm{y}_1',...,\bm{y}_J')^{'}$, $\bm{\theta}$ represents the model parameters, and $\tau_{\text{max}}$ is the maximum temperature used in PPP-RB. 
\end{corollary}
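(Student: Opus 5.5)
The plan is to verify detailed balance for the transition kernel each chain uses in the first stage, chain by chain. The corollary is essentially the standard Metropolis--Hastings result with the usual posterior replaced by its tempered version. No swaps occur in stage one, so the $L$ chains evolve independently (Steps~\ref{algo:step2}--\ref{algo:step4} of Algorithm~\ref{algo:1}). It therefore suffices to fix an arbitrary $\ell$ with $\tau_\ell\in[1,\tau_{\max}]$ and study one chain.

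\textbf{Step 1: the target is a well-defined density.} Define the unnormalized target $\pi_\ell(\bm{\theta})=[\bm{y}_1\mid\bm{\theta}]^{1/\tau_\ell}[\bm{\theta}]$. I will check that $\int \pi_\ell(\bm{\theta})\,d\bm{\theta}<\infty$, so that $[\bm{\theta}\mid\bm{y}_1]_{\tau_\ell}$ is a proper density. Since $0<1/\tau_\ell\le 1$, a bounded likelihood gives $[\bm{y}_1\mid\bm{\theta}]^{1/\tau_\ell}\le \max(1,\sup_{\bm{\theta}}[\bm{y}_1\mid\bm{\theta}])$. Integrating this bound against the proper prior gives finiteness. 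I expect this to be the main (mild) obstacle: in full generality it needs a regularity assumption. Either the likelihood is bounded in $\bm{\theta}$, or, more generally, $\int[\bm{y}_1\mid\bm{\theta}]^{1/\tau_\ell}[\bm{\theta}]\,d\bm{\theta}<\infty$ holds directly, which follows from H\"older/Jensen when the untempered posterior is proper and the prior is proper. I would state this as a standing assumption.

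\textbf{Step 2: detailed balance for the kernel.} Take the MCMC used in Step~\ref{algo:step3} to be Metropolis--Hastings with proposal density $q(\bm{\theta}^*\mid\bm{\theta})$ and acceptance probability
\[
\alpha(\bm{\theta},\bm{\theta}^*)=\min\left(1,\frac{\pi_\ell(\bm{\theta}^*)q(\bm{\theta}\mid\bm{\theta}^*)}{\pi_\ell(\bm{\theta})q(\bm{\theta}^*\mid\bm{\theta})}\right).
\]
The normalizing constant cancels, so the kernel is computable from $\pi_\ell$ alone. For $\bm{\theta}\neq\bm{\theta}^*$ the off-diagonal part of the kernel is $P(\bm{\theta},\bm{\theta}^*)=q(\bm{\theta}^*\mid\bm{\theta})\alpha(\bm{\theta},\bm{\theta}^*)$, and
\[
\pi_\ell(\bm{\theta})P(\bm{\theta},\bm{\theta}^*)=\min\{\pi_\ell(\bm{\theta})q(\bm{\theta}^*\mid\bm{\theta}),\,\pi_\ell(\bm{\theta}^*)q(\bm{\theta}\mid\bm{\theta}^*)\}.
\]
This expression is symmetric in $(\bm{\theta},\bm{\theta}^*)$. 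The rejection (diagonal) component satisfies detailed balance trivially. If a Gibbs or other component-wise sampler is used instead, I would note that each conditional update is reversible with respect to $\pi_\ell$, and composing such updates preserves invariance.

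\textbf{Step 3: conclude stationarity and convergence.} Integrate the detailed balance identity over $\bm{\theta}$ to get $\int \pi_\ell(\bm{\theta})P(\bm{\theta},A)\,d\bm{\theta}=\int_A\pi_\ell$, so $[\bm{\theta}\mid\bm{y}_1]_{\tau_\ell}$ is stationary. For the chain to actually converge to it, I would add the standard conditions, citing the usual Metropolis--Hastings theory. The proposal should be positive on the support of the prior, so that $\pi_\ell$-irreducibility holds; this support is the same for every $\tau_\ell$, because tempering does not change where the likelihood vanishes. Aperiodicity holds because the rejection probability is positive. Setting $\tau_1=1$ recovers the untempered posterior $[\bm{\theta}\mid\bm{y}_1]$ for the cold chain, which is the base case for the later stage-wise results.
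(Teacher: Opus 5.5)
Your proposal is correct and follows essentially the paper's proof. The paper writes the Metropolis--Hastings kernel for the tempered target $[\bm{y}_1\mid\bm{\theta}]^{1/\tau_\ell}[\bm{\theta}]$ with a Dirac rejection term, dismisses the diagonal case, and checks detailed balance by splitting into $r_{1,\ell}\le 1$ and $r_{1,\ell}>1$; your symmetric-minimum identity compresses that case split into one line. Your Steps 1 and 3 (properness via Jensen, and irreducibility/aperiodicity) go beyond what the paper proves or the corollary needs. If you keep them, note that a positive rejection probability is not automatic (an independence proposal equal to the target never rejects), so you should derive aperiodicity from positivity of the proposal density instead.
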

Corollary \ref{corollary1} follows from standard results on the MH algorithm. In particular, the corresponding transition kernel satisfies the detailed balance condition with respect to the tempered posterior distributions, which implies that these distributions are invariant (and hence stationary) for the kernel.

\begin{corollary}
\label{corollary2}
Let $\bm{\Theta}$ denote the augmented parameter space for $L$ Markov chains, and define the joint tempered distribution as
\[
    \Pi(\bm{\Theta}) = \prod_{\ell=1}^{L} [\bm{\theta} \mid\bm{y}]_{\tau_{\ell}},
\]
where $[\bm{\theta} \mid\bm{y}]_{\tau_{\ell}} \propto [\bm{y} \mid\bm{\theta}]^{1/\tau_{\ell}}\,[\bm{\theta}]$.
In the remaining $J-1$ stages of PPP-RB, both the within-chain update and the 
between-chain exchange at each stage satisfy detailed balance with respect to 
$\Pi(\bm{\Theta})$. Consequently, $\Pi(\bm{\Theta})$ is invariant under the 
transition kernel, where the transition kernel denotes the kernel of PPP-RB combining the within-chain update and between-chain exchange steps.
\end{corollary}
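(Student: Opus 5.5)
Read at stage $j$, the joint target in Corollary \ref{corollary2} is $\Pi_j(\bm{\Theta})=\prod_{\ell=1}^L[\bm{\theta}_\ell\mid\bm{y}_{1:j}]_{\tau_\ell}$, with $\bm{\Theta}=(\bm{\theta}_1,\ldots,\bm{\theta}_L)$ and $[\bm{\theta}\mid\bm{y}_{1:j}]_{\tau_\ell}\propto[\bm{y}_{1:j}\mid\bm{\theta}]^{1/\tau_\ell}[\bm{\theta}]$. At $j=J$ this is the displayed $\Pi(\bm{\Theta})$. I would prove detailed balance for the two kernels separately and then combine them. Invariance of a composition or mixture follows from invariance of each factor. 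Detailed balance of the full cycle is not needed for the stated conclusion, and it can fail for a deterministic composition.

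\textbf{Within-chain update.} By induction on $j$, with Corollary \ref{corollary1} as the base case, the stage-$(j-1)$ samples of chain $\ell$ are treated as exact draws from $q_\ell(\bm{\theta})=[\bm{\theta}\mid\bm{y}_{1:(j-1)}]_{\tau_\ell}$. This is the idealization standard in PP-RB: resampling with replacement approximates $q_\ell$ as $K\to\infty$. The update is then an independence MH sampler with proposal $q_\ell$. For $\bm{\theta}\neq\bm{\theta}^*$ its kernel is $P_\ell(\bm{\theta},\bm{\theta}^*)=q_\ell(\bm{\theta}^*)\min(1,r_{j,\ell})$. The key identity is the recursive factorization
\[
[\bm{\theta}\mid\bm{y}_{1:j}]_{\tau_\ell}\propto[\bm{y}_j\mid\bm{\theta},\bm{y}_{1:(j-1)}]^{1/\tau_\ell}\,q_\ell(\bm{\theta}),
\]
which follows from $[\bm{y}_{1:j}\mid\bm{\theta}]=[\bm{y}_j\mid\bm{\theta},\bm{y}_{1:(j-1)}][\bm{y}_{1:(j-1)}\mid\bm{\theta}]$. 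Substituting it shows that $\pi_\ell(\bm{\theta})P_\ell(\bm{\theta},\bm{\theta}^*)$ is proportional to
\[
q_\ell(\bm{\theta})q_\ell(\bm{\theta}^*)\min\{[\bm{y}_j\mid\bm{\theta},\cdot]^{1/\tau_\ell},[\bm{y}_j\mid\bm{\theta}^*,\cdot]^{1/\tau_\ell}\}.
\]
This expression is symmetric in $(\bm{\theta},\bm{\theta}^*)$, so $P_\ell$ is reversible. Chains are updated independently, so the product kernel $\prod_\ell P_\ell$ is reversible with respect to the product measure $\Pi_j$.

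\textbf{Between-chain exchange.} The swap proposal maps $\bm{\Theta}$ to $\bm{\Theta}'$ by interchanging coordinates $1$ and $\ell$. It is deterministic and an involution, hence symmetric, and the chain $\ell$ is chosen independently of the state. So the MH ratio is $\Pi_j(\bm{\Theta}')/\Pi_j(\bm{\Theta})$. Only the factors for coordinates $1$ and $\ell$ change, and the prior terms and normalizing constants cancel. This leaves
\[
\left(\frac{[\bm{y}_{1:j}\mid\bm{\theta}_{\text{hot}}]}{[\bm{y}_{1:j}\mid\bm{\theta}_{\text{cold}}]}\right)^{1-1/\tau_\ell},
\]
and splitting $[\bm{y}_{1:j}\mid\cdot]$ into its two factors recovers Equation \eqref{swap_ratio}. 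Any MH kernel built from a symmetric proposal with acceptance $\min(1,\Pi_j(\bm{\Theta}')/\Pi_j(\bm{\Theta}))$ satisfies $\Pi_j(\bm{\Theta})K(\bm{\Theta},\bm{\Theta}')=\min\{\Pi_j(\bm{\Theta}),\Pi_j(\bm{\Theta}')\}$, which is symmetric. This gives detailed balance. Averaging over the choice of $\ell$ preserves it.

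\textbf{Combination and main obstacle.} Each kernel leaves $\Pi_j$ invariant, so the stage-$j$ kernel, consisting of within-chain updates interleaved with swaps every $m$ iterations, leaves $\Pi_j$ invariant. Marginalizing gives the cold-chain marginal $[\bm{\theta}\mid\bm{y}_{1:j}]$. Induction over $j$ yields the claim at $j=J$. The main obstacle is the within-chain step. Its validity depends on the proposal density being exactly $q_\ell$ and on that density cancelling in the ratio. After swaps, the stored stage-$(j-1)$ sample sets must still represent $q_\ell$ for each temperature, so swaps have to exchange only current states and not the proposal pools. I would state this explicitly, together with the infinite-sample idealization of the empirical resampling distribution, as the assumption under which detailed balance is exact.
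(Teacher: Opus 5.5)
Your proposal is correct and follows the same route as the paper's proof. The within-chain step is treated as an independence Metropolis--Hastings sampler whose proposal density $[\bm{\theta}\mid\bm{y}_{1:(j-1)}]_{\tau_\ell}$ cancels through the factorization $[\bm{y}_{1:j}\mid\bm{\theta}]=[\bm{y}_j\mid\bm{\theta},\bm{y}_{1:(j-1)}][\bm{y}_{1:(j-1)}\mid\bm{\theta}]$, and the swap is treated as a symmetric-proposal Metropolis--Hastings move on the product space targeting $\Pi$. The paper argues by a case split ($r\le 1$ versus $r>1$) for $J=2$ and $L=2$, while your $\min\{\cdot,\cdot\}$ form, stage-indexed targets $\Pi_j$, explicit exact-draw idealization, and use of invariance rather than detailed balance for the composed kernel generalize and sharpen the same argument.
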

Corollary \ref{corollary2} follows from results associated with MCMCMC. 
Detailed balance for the within-chain update holds by the results on the MH algorithm, applied to each $[\bm{\theta} \mid\bm{y}]_{\tau_{\ell}}$ individually. For the between-chain exchange, we consider the joint distribution over all 
chains. A proposal for the between-chain exchange is symmetric and the MH acceptance ratio in Equation \eqref{swap_ratio} ensures that the detailed balance condition holds with respect to $\Pi(\bm{\Theta})$.
\begin{corollary}
\label{corollary3}
After $J-1$ recursive updates of PPP-RB, the marginal stationary 
distribution of the cold chain
\[
\int \cdots\int \Pi(\bm{\Theta})\, d\bm{\theta}_2 \cdots d\bm{\theta}_L 
\]
is the posterior distribution $[\bm{\theta} \mid \bm{y}]$.
\end{corollary}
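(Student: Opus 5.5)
The plan is to derive Corollary \ref{corollary3} from Corollary \ref{corollary2} by marginalizing over the hot-chain coordinates. The key point is that $\Pi(\bm{\Theta})$ is a product measure, so the integral factorizes.

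First, I would make the recursion precise by induction on the stage $j$. The hypothesis is that at the end of stage $j-1$ the $\ell$th chain targets $[\bm{\theta}\mid\bm{y}_{1:(j-1)}]_{\tau_\ell}\propto[\bm{y}_{1:(j-1)}\mid\bm{\theta}]^{1/\tau_\ell}[\bm{\theta}]$, and the joint target is the product of these over $\ell$. The base case $j=2$ is exactly Corollary \ref{corollary1}. For the inductive step I would use the factorization
\[
[\bm{y}_{1:j}\mid\bm{\theta}]^{1/\tau_\ell}=[\bm{y}_j\mid\bm{\theta},\bm{y}_{1:(j-1)}]^{1/\tau_\ell}[\bm{y}_{1:(j-1)}\mid\bm{\theta}]^{1/\tau_\ell}.
\]
This shows that the within-chain MH step with proposal $[\bm{\theta}\mid\bm{y}_{1:(j-1)}]_{\tau_\ell}$ and ratio \eqref{ppprb_mh} targets $[\bm{\theta}\mid\bm{y}_{1:j}]_{\tau_\ell}$. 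The swap ratio \eqref{swap_ratio} is the ratio of this same product at the exchanged and unexchanged states. Corollary \ref{corollary2}, applied with $\bm{y}$ replaced by $\bm{y}_{1:j}$, then shows that the stage-$j$ kernel leaves $\prod_\ell[\bm{\theta}_\ell\mid\bm{y}_{1:j}]_{\tau_\ell}$ invariant. At $j=J$ we have $\bm{y}_{1:J}=\bm{y}$, so the invariant law is $\Pi(\bm{\Theta})$ as defined in Corollary \ref{corollary2}.

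Second, I would carry out the marginalization. Writing $\bm{\Theta}=(\bm{\theta}_1,\dots,\bm{\theta}_L)$ and using Fubini (all factors are nonnegative),
\[
\int\cdots\int\Pi(\bm{\Theta})\,d\bm{\theta}_2\cdots d\bm{\theta}_L=[\bm{\theta}_1\mid\bm{y}]_{\tau_1}\prod_{\ell=2}^L\int[\bm{\theta}_\ell\mid\bm{y}]_{\tau_\ell}\,d\bm{\theta}_\ell=[\bm{\theta}_1\mid\bm{y}]_{\tau_1}.
\]
Each remaining integral equals $1$ provided each tempered posterior is a proper, normalized density. Since $\tau_1=1$, the surviving factor is $[\bm{y}\mid\bm{\theta}]^{1}[\bm{\theta}]/\int[\bm{y}\mid\bm{\theta}][\bm{\theta}]\,d\bm{\theta}=[\bm{\theta}\mid\bm{y}]$.

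The main obstacle is the integrability assumption and its interaction with the particle-based implementation. Properness of the hot tempered posteriors does not follow automatically from properness of $[\bm{\theta}\mid\bm{y}]$ when $1/\tau_\ell<1$. I would address this by adding an explicit assumption, for example $\int[\bm{y}_{1:j}\mid\bm{\theta}]^{1/\tau_\ell}[\bm{\theta}]\,d\bm{\theta}<\infty$ for all $\ell$ and $j$. This holds whenever the prior is proper and the likelihood is bounded, because then $[\bm{y}\mid\bm{\theta}]^{1/\tau}\le C^{1/\tau}$. A second subtlety is that proposals are resampled from finite stage-$(j-1)$ samples rather than drawn from the exact tempered posterior. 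I would state the result at the level of the idealized kernel, whose proposal is the exact distribution, as the preceding corollaries do, and remark that the empirical version converges as $K\to\infty$.
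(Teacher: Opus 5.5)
Your proposal is correct and takes the same route as the paper. The paper's proof, written for $L=2$, integrates the hot-chain factor out of the product $\Pi(\bm{\Theta})$, uses $\tau_1=1$, and takes invariance from Corollary~\ref{corollary2}. Your stage-wise induction and your explicit properness assumption on the tempered posteriors are careful additions that the paper leaves implicit; in particular, the induction correctly identifies the stage-$j$ invariant law as $\prod_\ell[\bm{\theta}_\ell\mid\bm{y}_{1:j}]_{\tau_\ell}$, which coincides with $\Pi(\bm{\Theta})$ only at $j=J$.
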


Corollary~\ref{corollary3} follows directly from 
Corollaries~\ref{corollary1} and~\ref{corollary2}. Because 
$\Pi(\bm{\Theta})$ is invariant under the transition kernel of  PPP-RB, and the cold chain has temperature $\tau_1=1$, its marginal 
reduces to the true posterior $[\bm{\theta} \mid\bm{y}]$.

\subsection{Practical consideration}
\label{ppprb_practical}
Although the implementation of PPP-RB is straightforward, there are several practical considerations that must be addressed in actual implementation. In this section, we provide general guidelines for these considerations.

\subsubsection{Max temperature}
Although PPP-RB can address the limitations of PP-RB introduced in Section \ref{intro}, it is important to ensure that the posterior distributions of the hot chains in the first stage of PPP-RB cover a sufficiently broad parameter space. In other words, if these posterior distributions do not cover the relevant parameter space (i.e., the region of non-negligible density of the posterior distribution obtained when fitting the model to the full data), PPP-RB may still fail to perform adequately. In that sense, it is important to determine the maximum temperature $\tau_{\text{max}}$. As with MCMCMC, it is not straightforward to establish a universal rule for the optimal $\tau_{\text{max}}$ because it may depend on the specific model and it requires the information on $[\bm{\theta} \mid \bm{y}]$. This is further complicated by the choice of partitioning scheme, which can be stochastic (e.g., random partition) and may vary across applications, making it difficult to provide a universally applicable guideline. However, it is worth noting that the optimal $\tau_{\text{max}}$ for PPP-RB may be lower than that for MCMCMC because PPP-RB relies only on a single partition in the first stage, and the likelihood based on a single partition is typically flatter than the full-data likelihood. In that sense, PPP-RB may achieve sufficient exploration of the parameter space with a relatively smaller $\tau_{\text{max}}$ than MCMCMC. One possible approach is to select $\tau_{\text{max}}$ such that the resulting posterior distribution is close to the prior distribution. Although finding such $\tau_{\text{max}}$ may require fitting multiple hot chains at different temperatures, the first stage is not computationally expensive, and parallel computing can further reduce the computational cost.

\subsubsection{Number of chains and temperature schedule}
The number of chains $L$ can be determined in conjunction with $\tau_{\text{max}}$. In other words, a large $L$ is recommended when $\tau_{\text{max}}$
is large, whereas a small $L$ may suffice when $\tau_{\text{max}}$ is small. This is to ensure that the gaps between temperatures are not too large because large gaps can lead to substantial changes in the log-likelihood within the MH acceptance ratio (i.e., Equation \eqref{swap_ratio}), resulting in an unacceptably low acceptance ratio for the between-chain exchange. As with MCMCMC, the acceptance ratio for the between-chain exchange is crucial for efficient exploration of the parameter space. An acceptance ratio between 0.2 and 0.4 tends to indicate that the gaps between adjacent temperatures are appropriate for efficient between-chain exchange. If the acceptance ratio falls below this range, one should increase $L$. Conversely, if the acceptance ratio exceeds this range, $L$ may be decreased.

While various approaches for adaptively tuning the temperature schedule have been proposed in the context of MCMCMC, as discussed in Section \ref{subsection_tempering}, they do not appear to be directly applicable to PPP-RB. This is because the temperature schedule determined in the first stage of PPP-RB is assumed to be fixed for the remaining stages, and adaptively adjusting the temperature schedule would prevent the chains from being obtained in parallel, thereby making the first stage computationally more expensive. Nonetheless, we observed empirically that geometrical spacing performs well in both numerical studies and real data analyses. This may be because geometric spacing provides a balanced coverage of the temperature range, avoiding excessively large gaps at low temperatures while still allowing sufficiently wide exploration at higher temperatures.

\subsubsection{Swap frequency and hot chain selection}
We recommend frequent between-chain exchanges (i.e., $m=1$) to improve the robustness of PPP-RB, especially when the posterior distribution in the current stage is expected to be significantly different from the posterior distribution at the next stage. By setting $m=1$, the algorithm maximizes the opportunity for potential exchanges between the cold and a hot chain, making it more likely that information from the hot chain is propagated to the cold chain and improving mixing. The frequent between-chain exchanges is particularly useful in the presence of uncertainty in $\tau_{\text{max}}$, $L$, and the temperature schedule. For illustration, consider the case where only a few hot chains correctly explore the regions of the parameter space that overlap with the true posterior distribution, due to the suboptimal $\tau_{\text{max}}$, $L$, or temperature schedule. In such cases, frequent swap proposals are crucial to allow the cold chain to ``access'' these informative hot chains more frequently, thereby ensuring that the cold chain can still effectively benefit from the limited number of hot chains that adequately explore the regions of the parameter space of the true posterior distribution. 

One should also select a hot chain for the between-chain exchange. The simplest approach is to select a hot chain at random. A more sophisticated approach is to track the MH acceptance ratio for each hot chain during burn-in and select a hot chain proportional to the acceptance ratio after burn-in. In our application, we randomly select a hot chain and we find this performs well in practice.
\section{Simulation study}
\label{simulation}
In this section, we demonstrate how PPP-RB can enable more robust inference than PP-RB, particularly in simulation settings where the posterior distributions at each stage differ substantially. 
\subsection{A Normal mixture of a light-tailed and a heavy-tailed distribution}
\label{simul_2}
We begin with a simple normal mixture model to assess how well PPP-RB is able to handle posterior shifts across stages. We simulated $N=1,000$ observations as
\[
y_i \sim p\cdot \text{Gau}(\mu, \sigma^2_1)+(1-p)\cdot \text{Gau}(\mu,\sigma^2_2),
\]
where $\mu=0$, $\sigma^2_1=0.3$, $\sigma^2_2=5$, and $p=0.8$. We then fit the same model to the data and compared the posterior distributions of the parameters obtained using the MH algorithm, PP-RB, and PPP-RB. To partition the data into two subsets $\bm{y}_1$ and $\bm{y}_2$, we first computed a threshold $Q_{0.85}(\bm{y})$, defined as the 0.85 quantile of the data. Based on $Q_{0.85}(\bm{y})$, we defined two index sets $\mathcal{I}_1 = \{ y_i : |y_i| \leq Q_{0.85}(\bm{y}) \}$ and $\mathcal{I}_2 = \{ y_i : |y_i| > Q_{0.85}(\bm{y}) \}$. We then constructed $\bm{y}_1$ by randomly selecting 60\% of the observations in $\mathcal{I}_1$ and 30\% of the observations in $\mathcal{I}_2$. Finally, $\bm{y}_2$ consisted of all observations not included in $\bm{y}_1$. As shown in the left panel of Figure \ref{fig:simul1_b}, $\bm{y}_1$ contained more observations from the bulk of the distribution, whereas $\bm{y}_2$ included more observations from the tail.

For all algorithms, we assigned priors $p \sim \text{Beta}(8,2)$, $\mu \sim \text{Gau}(0,100)$, $\sigma_1^2 \sim \text{IG}(15,4.2)$, and $\sigma_2^2 \sim \text{IG}(15,70)$, and we drew 50,000 MCMC samples with the first 5,000 as burn-in. The acceptance ratios were between 0.2 and 0.4. For PPP-RB, we used $L=5$ chains consisting of one cold chain and four hot chains. The temperature $\{T_\ell\}_{\ell=1}^5$ was defined by $T_1=1$, and $T_\ell=\exp(s_\ell)$ for $\ell=2,\ldots,5$, where $s_\ell$ were equally spaced over the interval $[0,2]$.

From Figure \ref{fig:simul1_b}, the first-stage posterior summaries for $p$ under PP-RB and PPP-RB differed substantially from those obtained with the MH algorithm. By contrast, for the remaining parameters, PP-RB and PPP-RB produced posterior summaries that were consistent with the MH results for $\mu$, $\sigma^2_1$ and $\sigma^2_2$. However, due to the discrepancy in $p$,  PP-RB at the second stage still yielded posterior distributions that were significantly different from those obtained using the MH algorithm. By contrast, PPP-RB produced posterior distributions for all parameters that were closely aligned with the MH results.

\begin{figure}[t]
\centering
\begin{subfigure}{1\textwidth}
    \centering
    \includegraphics[width=\linewidth, trim=0 20 0 10, clip]{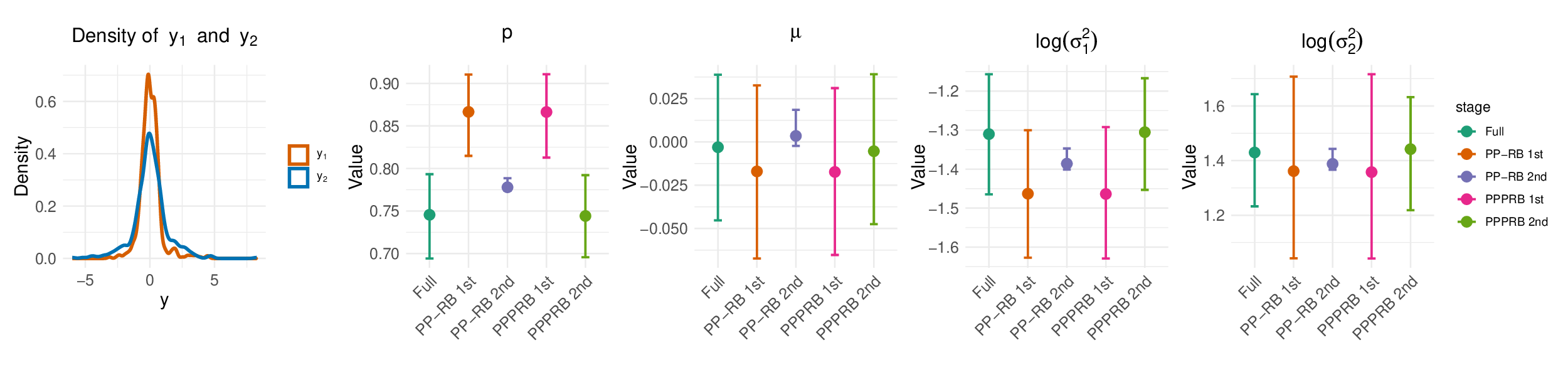}
    \caption{Numerical study results from Section \ref{simul_2}. Left: Density plot of simulated data with two partitions, $\bm{y}_1$ and $\bm{y}_2$, highlighted in orange and blue, respectively. Second to fifth: posterior summaries across algorithms for $p$, $\mu$, $\log \sigma_1^2$, and $\log \sigma_2^2$, respectively.}
    \label{fig:simul1_b}

\end{subfigure}
\vspace{0.5em} 
\begin{subfigure}{1\textwidth}
    \centering
    \includegraphics[width=\linewidth, trim=0 20 0 10, clip]{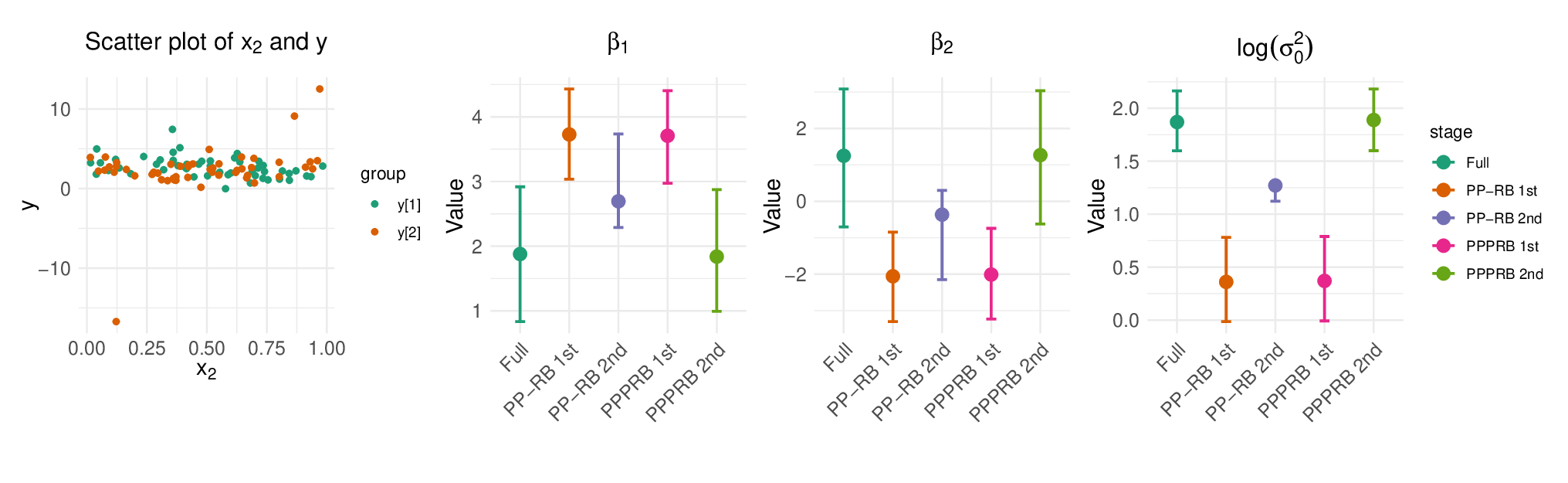}
    \caption{Numerical study results from Section \ref{simul_1}. Left: simulated data with two partitions, $\bm{y}_1$ and $\bm{y}_2$, highlighted in orange and green, respectively. Second to fourth: posterior summaries across algorithms for $\beta_1$, $\beta_2$, and $\log \sigma_0^2$, respectively.}
    \label{fig:simul1_a}
\end{subfigure}
\caption{Comparison of algorithms for the numerical studies in Sections \ref{simul_2} and \ref{simul_1}. MH, PP-RB (1st, 2nd), and PPP-RB (1st, 2nd) denote the MH algorithm, and the first- and second-stages of PP-RB and PPP-RB, respectively.}
\end{figure}
\subsection{Model misspecification}
\label{simul_1}
For the second, we consider the case where the model is misspecified. We simulated $N=100$ observations as
\[
y_i\sim p\cdot \text{Gau}(\bm{x}_i' \bm{\beta},\sigma^2_1) + (1-p) \cdot \text{Gau}(\bm{x}_i' \bm{\beta},\sigma^2_2), \quad i=1,...,N,
\]
where $p = 0.9$, $\sigma_1^2 = 1$, $\sigma_2^2 = 100$, and $\bm{\beta}=(3,-1)^{'}$. The vector $\bm{x}_i$ consisted of an intercept term and a random sample drawn from a uniform distribution. After generating data, we fit the linear model: $y_i =\bm{x}_i' \bm{\beta}+ \epsilon_i^*$, $\epsilon_i^*\sim \text{Gau}(0,\sigma^2_0)$ using a MH algorithm, PP-RB, and PPP-RB. For PP-RB and PPP-RB, we randomly partitioned the data into two subsets and considered two-stages. As shown in the left panel of Figure \ref{fig:simul1_a}, the second partition $\bm{y}_2$ included outliers. This can lead to noticeably different estimates of $\bm{\beta}$ across the two subsets. 

For all algorithms, we used vague priors $\bm{\beta} \sim \text{Gau}(\bm{0}, 1000\bm{I})$ and $\sigma_0^2 \sim \text{IG}(0.1, 0.1)$. We drew 50,000 MCMC samples, discarding the first 5,000 as burn-in. The resulting acceptance ratios ranged between 0.2 and 0.4. 
The setting for PPP-RB was the same as that in Section \ref{simul_2}.

From Figure \ref{fig:simul1_a}, it can be seen that the posterior distributions at the first stage for both PP-RB and PPP-RB deviated significantly from those obtained using the MH algorithm. Nonetheless, at the second stage, the posterior distribution from PPP-RB coincided with that of the MH algorithm due to the swapping of posterior samples between cold and hot chains. In contrast, the PP-RB posterior distribution remained significantly different because it was not able to explore the parameter space thoroughly.

\section{Data Analysis}
\label{real_data_analysis}
Because PP-RB and PPP-RB are designed to target the true posterior distribution in a more computationally efficient manner, it is important to first assess how closely their resulting posterior distributions agree with that of the MH algorithm, and then evaluate their efficiency in terms of effective sample size per unit time. We assessed these using two applications in this Section. The first involves a Hawkes point process model applied to earthquake frequency data, while the second employs a Gaussian Process (GP) model for sea surface salinity (SSS) data.
\subsection{1989 Loma Prieta Earthquake}
\label{hawkes}
In 1989, one of the most significant earthquakes in Northern California, the Loma Prieta earthquake, also known as the ``World Series earthquake,” occurred in the San Francisco Bay Area. This earthquake had a surface wave magnitude of 7.1 and led to 62 confirmed fatalities, with property damage and recovery costs estimated at \$6 billion \citep{us1990loma}. As an earthquake occurs, it often triggers subsequent earthquakes (aftershocks) in the surrounding region \citep{das2003spatial}. This seismic activity is often modeled using a Hawkes process \citep{hawkes1971spectra}, where each earthquake increases the conditional intensity of future earthquakes for some period of time, capturing the self-exciting nature of seismic events \citep[see, e.g.,][for applications of the Hawkes process to earthquake data]{kwon2023flexible, davis2024fractional,iwata2025mixture}.

A Hawkes process is a point process whose conditional intensity at time $t$ depends on the history up to time $t$. Formally, this can be represented by the conditional intensity function as: 
\[
\lambda(t) = \mu(t) + \sum_{t_i:t_i <t} \psi (t-t_i),
\]
where $\mu(t)\geq 0$ is the background intensity and $\psi(\cdot)$ is called a triggering kernel function, which controls how the past events affect the conditional intensity. When $\psi(\cdot)>0$, the process is called ``self-exciting,'' whereas when $\psi(\cdot)<0$, it is called ``self-regulating,'' indicating that past events respectively increase or decrease the conditional intensity, and hence the likelihood of future events. In this application, we express the Hawkes intensity as 
\[
\lambda(t)=\mu + \sum_{t_i : t_i<t} \big( \alpha \exp (-\beta (t-t_i)) \big), \quad \mu ,\alpha,\beta>0,
\]
where $\mu$, $\alpha$, and $\beta$ can be understood as the background intensity, the strength of self-excitation, and decay rate, respectively. In what follows, we use the re-parameterization $\eta=\alpha/\beta$, known as the branching ratio, which corresponds to the expected number of offspring events triggered by a single event. In our application, we assume $\eta<1$ to ensure the process remains stable over time.

We obtained earthquake data with magnitudes greater than 2.5 in 1989-1991 from \citet{USGS2026}. We focused on the region within 500 km of the main shock of the 1989 Loma Prieta earthquake (longitude -121.88, latitude 37.04) and further focused on earthquakes that occurred within California (longitude $\in$ [-125, -114], latitude $\in$ [32, 42]) between 2 Jan 1989 and 18 Oct 1990, UTC, resulting in a total of $N=2,250$ events during this period. The top panel in Figure \ref{fig_hawkes} shows a time series plot where the occurrence time of the 1989 Loma Prieta Earthquake is indicated by a blue dashed vertical line.
\begin{figure}[t]
\centering
\includegraphics[width=1\linewidth]{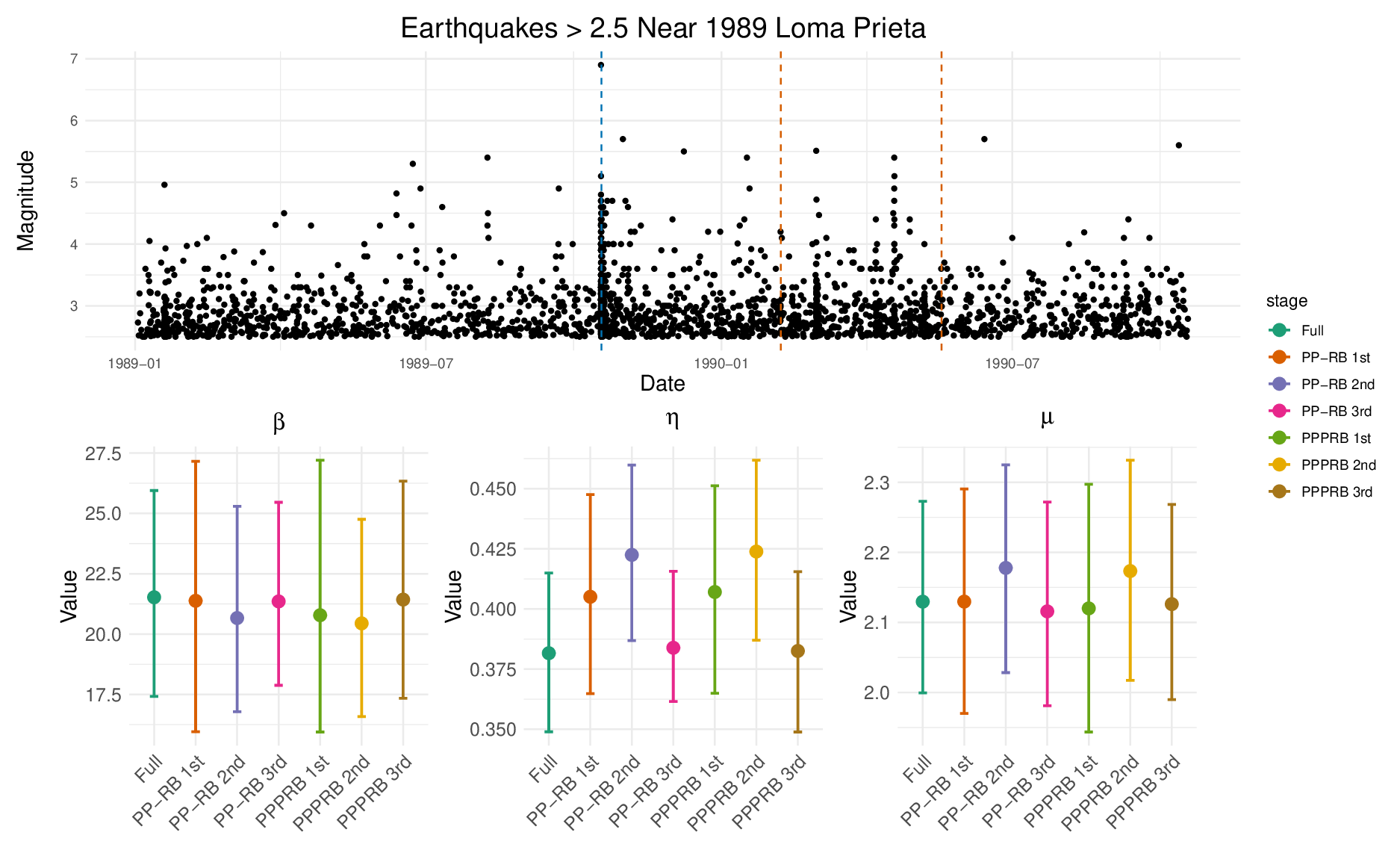}
\caption{Top: Time series plot of earthquake data. Orange dashed vertical lines indicate the time points at which the data are partitioned. Bottom left, bottom middle and bottom right: Posterior summaries comparison for standard MCMC (full data), PP-RB, and PPP-RB, respectively.}
\label{fig_hawkes}
\end{figure}

We assumed that the data arrived in an online setting, and our goal is to recursively update the posterior distribution so that the model does not need to be refit to the full dataset. Specifically, We assumed that data collected prior to 6 February 1990 (UTC) are available initially, while data from 6 February 1990 (UTC) to 17 May 1990 (UTC), and data after 17 May 1990 (UTC), become available sequentially in batches. Therefore, we considered three stages for PPP-RB and PP-RB, and MH algorithm, for comparison.

For all algorithms, we assumed the prior distributions $\beta \sim \text{Gamma}(2,0.5)$, $\mu\sim \text{Gamma}(1,1)$, and $\eta \sim \text{Beta}(2,2)$ and we drew 30,000 MCMC samples discarding the first 5,000 as burn-in. For PPP-RB, we considered $L=10$ chains (one cold and 9 hot chains). The temperature $\{\tau_{\ell}\}_{\ell=1}^{10}$ is defined by $T_{1}=1$ and $\tau_{\ell}=\exp(s_{\ell})$ for $\ell=2,...,10$, where $s_{\ell}$ are evenly spaced points over the interval $[0,2]$. The acceptance ratios ranged from 0.2 and 0.4.

To measure the agreement between posterior samples, we used the $\widehat{R}$ statistic \citep{gelman1992inference}. In other words, we used the potential scale reduction factor diagnostic to assess whether PP-RB and PPP-RB produce posterior samples comparable to those obtained from the MH algorithm. We also used the convergence diagnostic measure $\widehat{R}^*$ proposed in \citet{vehtari2021rank}. This measure modifies the $\widehat{R}$ statistic by applying rank normalization and folding to the posterior draws \citep{vehtari2021rank}. In terms of these two measures, both PPP-RB and PP-RB show good agreement with MH algorithm, with PPP-RB performing better. Specifically, for $\widehat{R}$, PPP-RB yields $\widehat{R}(\mu)=1.01,$ $\widehat{R}(\eta)=\widehat{R}(\beta)=1$, where $\widehat{R}(\mu)$, $\widehat{R}(\eta)$, and $\widehat{R}(\beta)$ denote the $\widehat{R}$ values for each parameter, whereas PP-RB yields $\widehat{R}(\mu)=1.07$, $\widehat{R}(\eta)=1.01$, and $\widehat{R}(\beta)=1.04$. PPP-RB is also slightly superior to PP-RB in terms of $\widehat{R}^*$, with $\widehat{R}^*(\mu)=\widehat{R}^*(\eta)=\widehat{R}^*(\beta)=1$ for PPP-RB, whereas $\widehat{R}^*(\mu)=\widehat{R}^*(\eta)=\widehat{R}^*(\beta)=1.01$ for PP-RB. Posterior summaries for each algorithm are presented in Figure \ref{fig_hawkes}.

\begin{table}[t]
\caption{
ESS and ET comparison across algorithms.
For PP-RB and PPP-RB, ET denotes total computation time over all stages.
}
\centering
\small
\setlength{\tabcolsep}{4pt}

\begin{tabular}{lccc@{\hspace{4mm}}ccc@{\hspace{4mm}}ccc}
\toprule

\multirow{2}{*}{Algorithm}
& \multicolumn{3}{c}{$\mu$}
& \multicolumn{3}{c}{$\eta$}
& \multicolumn{3}{c}{$\beta$} \\

\cmidrule(lr){2-4}
\cmidrule(lr){5-7}
\cmidrule(lr){8-10}

& ESS & ET & ESS/ET
& ESS & ET & ESS/ET
& ESS & ET & ESS/ET \\

\midrule

Full
& 708.1 & 806.2 & 0.9
& 1229.5 & 806.2 & 1.5
& 246.2 & 806.2 & 0.3 \\

PP-RB
& 211.9 & 344.3 & 0.6
& 228.1 & 344.3 & 0.7
& 237.5 & 344.3 & 0.7 \\

PPP-RB
& 2901.3 & 397.5 & 7.3
& 1963.1 & 397.5 & 4.9
& 1839.4 & 397.5 & 4.6 \\

\bottomrule
\end{tabular}
\label{table1}
\end{table}

We also investigated the computational cost for each algorithm. In particular, the measure we use for comparison is $\frac{\text{Effective Sample Size (ESS)}}{\text{Elapsed time (ET)}}$, which represents the number of effective samples obtained per unit time and thus serves as a measure of computational efficiency. From Table \ref{table1}, it can be seen that PPP-RB yields the highest ESS among all algorithms for all parameters, while PP-RB outperforms the others in terms of ET. In terms of $\text{ESS}/\text{ET}$, PPP-RB is the most efficient, with substantially larger $\text{ESS}/\text{ET}$ values compared to both MH and PP-RB. 

Algorithms were further compared for the case where data were partitioned into two subsets based on the occurrence time of the 1989 Loma Prieta Earthquake in the supplemental material. In this setting, we observed that the posterior distribution discrepancy at each stage is more severe, with PP-RB showing less accurate estimates, whereas PPP-RB still provides posterior distribution that aligns well with the posterior summary from MH algorithm, demonstrating its robustness.

\subsection{Sea Surface Salinity}
Sea surface salinity (SSS) quantifies the salinity at the ocean's surface and plays a crucial role in atmosphere-ocean interactions and vertical ocean circulation \citep{durack2016keeping,dinnat2019remote}. Its large-scale variations are primarily determined by evaporation, precipitation, and oceanic circulation \citep{terray2012near}. SSS increases or decreases as freshwater is added by precipitation or removed by evaporation, thus serving as an indicator of the marine hydrological cycle \citep{gordon2008sea}. SSS also provides valuable insights into the global water cycle and ocean dynamics; therefore, it is important to monitor SSS \citep{barale2010oceanography}.

\begin{figure}[t]
\centering
\includegraphics[width=1\linewidth]{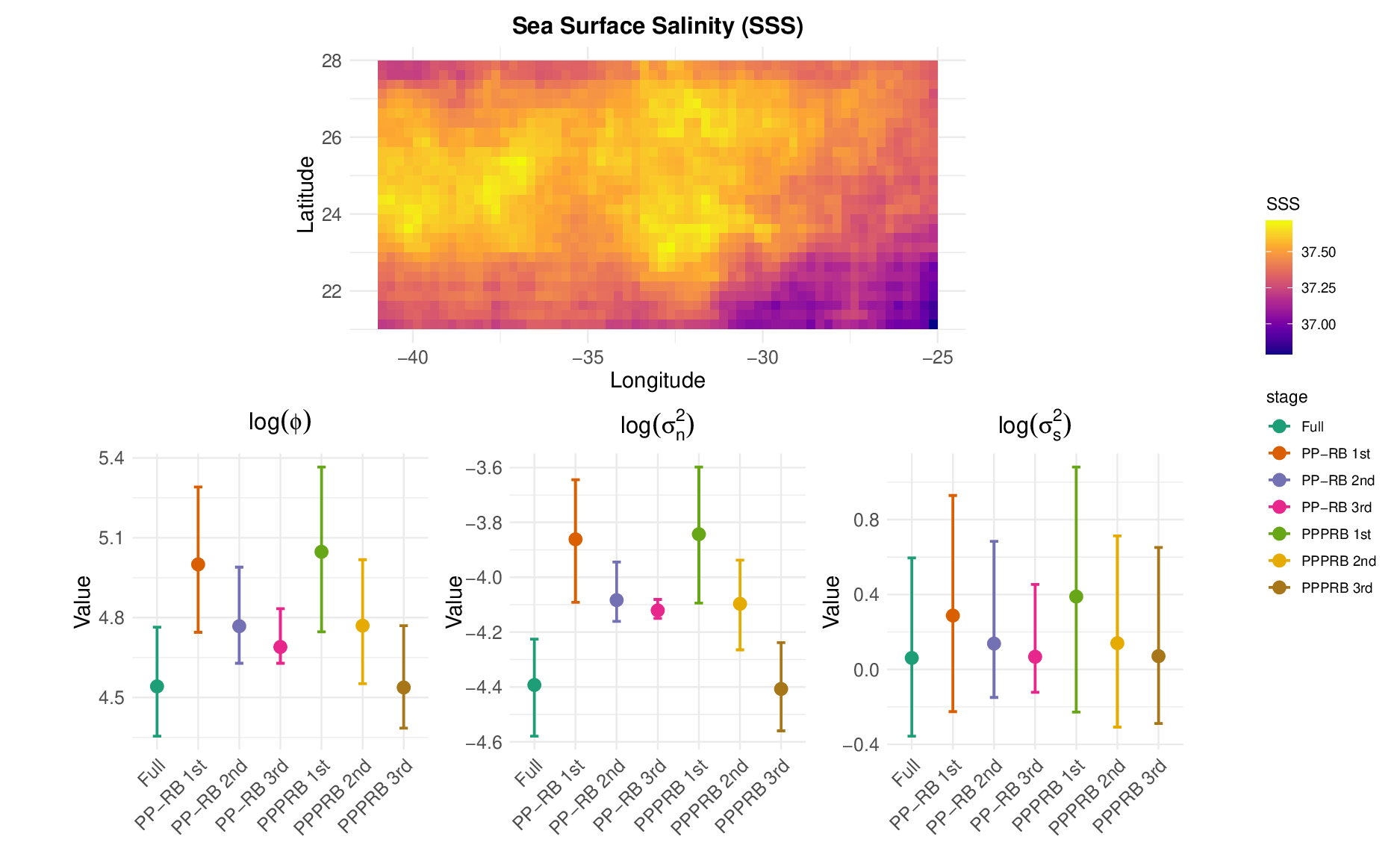}
\caption{Top: SSS data for 15 July 2012 on the North Atlantic region. Bottom left, Bottom middle, and Bottom right: Posterior summaries comparison for standard MCMC (full data), PP-RB, and PPP-RB, respectively.}
\label{fig_gp}
\end{figure}

\begin{table}[t]
\caption{
ESS and ET comparison across algorithms.
For PPP-RB, ET denotes total computation time over all stages.
}
\centering
\small
\setlength{\tabcolsep}{4pt}

\begin{tabular}{lccc@{\hspace{4mm}}ccc@{\hspace{4mm}}ccc}
\toprule

\multirow{2}{*}{Algorithm}
& \multicolumn{3}{c}{log($\sigma^2_s$)}
& \multicolumn{3}{c}{log($\sigma^2_n$)}
& \multicolumn{3}{c}{log($\phi$)} \\

\cmidrule(lr){2-4}
\cmidrule(lr){5-7}
\cmidrule(lr){8-10}

& ESS & ET & ESS/ET
& ESS & ET & ESS/ET
& ESS & ET & ESS/ET \\

\midrule

MH
& 48.3   & 2685.3 & 0.02
& 588.1  & 2685.3 & 0.22
& 67.7   & 2685.3 & 0.03 \\

PPP-RB
& 3423.3 & 828.7 & 4.13
& 3220.4 & 828.7 & 3.89
& 2952.9 & 828.7 & 3.56 \\

\bottomrule
\end{tabular}
\label{table2}
\end{table}
We obtained SSS data from the ESA Sea Surface Salinity Climate Change Initiative \citep{Boutin2024} for 15 July 2012 and focused on the North Atlantic region, defined by longitudes $[-41,-25]$ and latitudes $[21,28]$, resulting in a total of $N = 1,792$ observations (top left panel in Figure \ref{fig_gp}). For PP-RB and PPP-RB, we randomly partitioned the data into three subsets, thus implementing a three-stage algorithm.

For the SSS data, we employed a Gaussian process (GP) model with a Mat\'ern \citep{matern2013spatial} covariance function, where the smoothness parameter is fixed at $3/2$, such that the covariance matrix is
\[
\bm{\Sigma} = \sigma^2 \left( (1 - \tau^2)\,\bm{R}(\phi) + \tau^2 \bm{I} \right),
\]
where the entries of the correlation matrix $\bm{R}(\phi)$ are given by $R_{ij}(\phi) = \left( 1 + \frac{d_{ij}}{\phi} \right)\exp\left( -\frac{d_{ij}}{\phi} \right)$, and $d_{ij}$ denotes the distance between $\bm{s}_i$ and $\bm{s}_j$. We reparametrize the covariance matrix above as
\[
\bm{\Sigma}=\sigma^2_s \bm{R}(\phi) + \sigma^2_n \bm{I},
\]
where $\sigma^2_s=\sigma^2(1-\tau^2)$ and $\sigma^2_n=\sigma^2 \tau^2$ are spatial variance and nugget term. With priors $\log \sigma^2_s \sim \text{Gau}(\log(0.6),1)$, $\log \sigma^2_n \sim \text{Gau}(\log(0.05),1)$, and $\log \phi \sim \text{Gau}(\log(d_{\text{med}}/5),1)$, where $d_{\text{med}}$ denotes the median of all pairwise distances between locations, we drew 30,000 MCMC samples discarding the first 5,000 as burn-in for each algorithm. For PPP-RB, we obtained $L=10$ chains in parallel with temperatures set as $T_1 = 1$ and $T_\ell = \exp(s_\ell)$ for $\ell = 2,\dots,10$, where $s_\ell$ are evenly spaced over the interval $[0,3]$. The acceptance ratios range from 0.2 and 0.4.

The posterior distribution from PPP-RB shows strong agreement with those from the MH algorithm in terms of $\widehat{R}$ and $\widehat{R}^*$. Specifically, $\widehat{R}(\log(\phi)) = \widehat{R}(\log(\sigma_s^2)) = 1$, $\widehat{R}(\log(\sigma_n^2)) = 1.01$, and $\widehat{R}^*(\log(\phi)) = \widehat{R}^*(\log(\sigma_s^2)) = 1$, $\widehat{R}^*(\log(\sigma_n^2)) = 1.01$. In contrast, PP-RB yields posterior distributions that differ substantially, with $\widehat{R}(\log(\phi)) = 2.49$, $\widehat{R}(\log(\sigma_s^2)) = 1.07$, $\widehat{R}(\log(\sigma_n^2)) = 5.53$, and $\widehat{R}^*(\log(\phi)) = 1.36$, $\widehat{R}^*(\log(\sigma_s^2)) = 1.13$, $\widehat{R}^*(\log(\sigma_n^2)) = 1.90$. The Top right and bottom panels in Figure \ref{fig_gp} show the posterior summaries for each algorithm. Notably, the posterior summaries from MH algorithm differ markedly from those of the first stage of PP-RB and PPP-RB. As a result of this discrepancy, PP-RB fails to adequately capture the posterior summary obtained from MH algorithm. In contrast, PPP-RB was able to capture the posterior summaries accurately, demonstrating more robust inference than PP-RB.

PP-RB failed to yield accurate posterior inferences, thus we compare only the MH algorithm and PPP-RB in terms of computational efficiency in Table \ref{table2}. PPP-RB was almost 3.2 times more efficient than the MH algorithm in terms of ET. Furthermore, due to the swapping of posterior samples between the cold and hot chains, PPP-RB achieves a significantly higher ESS than the MH algorithm.

\section{Discussion}
\label{discussion}
Bayesian inference is appealing because it allows us to use a coherent statistical framework to learn about data generating processes. In particular, conditional representations inherent to Bayesian hierarchical models \citep{berliner1996hierarchical} provide a highly flexible framework for modeling dependent data. Although recent advances in computational resources have facilitated Bayesian inference, standard algorithms such as MCMC remain computationally prohibitive for large datasets. This has motivated growing interest in scalable algorithms for fitting Bayesian models, including approximation methods such as variational inference \citep{jordan1999introduction}, integrated nested Laplace approximation \citep{rue2009approximate}, and subsampling-based MCMC \citep{bradley2021approach}.

The Prior Proposal-Recursive Bayesian inference \citep{hooten2021making} involves a recursive algorithm that targets the exact posterior distribution. By leveraging parallel computing resources and updating the posterior distribution sequentially across stages, this scalable approach makes Bayesian inference more accessible to a wide range of applications. Despite its strengths, the effective sample size of PP-RB can decrease quickly. More critically, PP-RB can result in estimates that deviate from the true posterior when posterior distributions across stages differ substantially from one another.

To address these limitations and enable more robust inference, we proposed PPP-RB, which integrates PP-RB with MCMCMC \citep{geyer1991mcmc}. We showed that PPP-RB targets the true posterior distribution and discussed practical considerations for its implementation. In numerical studies, we compared PP-RB and PPP-RB, highlighting that PPP-RB effectively addressed the limitations of PP-RB. We also demonstrated the advantages of PPP-RB over PP-RB and the standard MH algorithm in terms of effective sample size and computational time through data analysis.

There are multiple ways to extend PPP-RB. First, developing a theoretical framework for tuning the algorithm would be desirable. Although we provided guidelines to ease the implementation, theoretical guidelines could further improve performance and ensure optimal tuning. Second, the optimal partitioning scheme could improve the efficiency of PPP-RB. While the partitioning scheme is fixed under certain settings (e.g., the online setting), data are typically randomly partitioned. Although we showed that hot chains in PPP-RB can effectively explore the parameter space, resulting in the correct inference, the impact of alternative partitioning strategies on efficiency and mixing remains an open question. 


\bibliographystyle{apalike}  
\bibliography{references}  

\clearpage
\appendix

\section{Supplementary Material}
\subsection{Details of the results presented in Section 2.2.}
\subsubsection{Temperature optimization criterion}\label{temp_opti}

For a tempered proposal of the form $[\bs{\theta} \mid \bs{y}_{1}]_{\gamma} \propto [\bs{y}_{1} \mid \bs{\theta}]^{\gamma} [\bs{\theta}]$, where $\gamma = 1/\tau$, selecting $\gamma$ is crucial to address the limitation of PP-RB (i.e., posterior shift). One criterion is to minimize the discrepancy between $[\bm{\theta} \mid \bm{y}_{1:J}]$ and the proposal $[\bm{\theta} \mid \bm{y}_{1}]_{\gamma}$. Among the various choices of discrepancy measures (e.g., total variation distance, Kullback-Leibler divergence, or the $L_2$ norm), we focus on the $\chi^2-$divergence, defined as 
\begin{align}
D_{\chi^{2}}\Big( [\bs{\theta} \mid \bs{y}_{1:J}] \Big\| [\bs{\theta} \mid \bs{y}_{1}]_{\gamma} \Big)=\int \frac{\{[\bs{\theta} \mid \bs{y}_{1:J}]\}^{2}}{[\bs{\theta} \mid \bs{y}_{1}]_{\gamma}} d\bm{\theta} - 1.
\end{align}
Then, the optimal $\gamma^*$ is such that 
\begin{align}
\label{optimal_temp}
\gamma^{*} &= \arg \min_{\gamma \geq 0} D_{\chi^{2}}\Big( [\bs{\theta} \mid \bs{y}_{1:J}] \Big\| [\bs{\theta} \mid \bs{y}_{1}]_{\gamma} \Big).
\end{align}
Note that Equation \eqref{optimal_temp} is also related to finding $\gamma^*$ such that it maximizes a rough approximation of the effective sample size \citep{liu1995blind,martino2017effective} defined as Equation \eqref{ess} in Section \ref{snis_pprb} because \begin{align*}
\arg \max_{\gamma \geq 0} \widehat{\text{ESS}}_{\gamma}
        =  \arg \max_{\gamma \geq 0} M\Big\{ 1 + \text{Var}\Big( \widetilde{w}_{\gamma}(\bs{\theta}_{1}) \Big) \Big\}^{-1}, \quad 
        \widetilde{w}_{\gamma}(\bs{\theta}) = \frac{[\bs{\theta} \mid \bs{y}_{1:J}]}{[\bs{\theta} \mid \bs{y}_{1}]_{\gamma}}
\end{align*}
and $\text{Var}\Big( \widetilde{w}_{\gamma}(\bs{\theta}_{1}) \Big) = D_{\chi^{2}}\Big( [\bs{\theta} | \bs{y}_{1:J}] \Big\| [\bs{\theta} | \bs{y}_{1}]_{\gamma} \Big)$. It is worth noting that Equation \eqref{optimal_temp} requires   $[\bm{\theta} \mid \bm{y}_{1:J}]$ , which is generally unavailable except in certain special cases.

\subsubsection{Asymptotic equivalence between PP-RB and SNIS}
\label{snis_pprb}
We show the equivalence between PP-RB and Self-normalized Importance sampling (SNIS). Suppose that we are interested in calculating expectations of the form
\begin{align*}
I(f) = \int f(\bm{\theta})  [\bm{\theta} \mid \bm{y}_{1:n}] \ d\bm{\theta}.
\end{align*}
We often approximate the expression above using a Monte Carlo estimator of the form 
\begin{align*}
\widetilde{I}(f) = \sum_{m=1}^{M} w_{m}(\bm{\theta}_{1:M}) f(\bm{\theta}_{m}),
\end{align*}
where $\bm{\theta}_{1}, \dots, \bm{\theta}_{M}$ are samples from a proposal distribution $[\bm{\theta}]_{\text{prop}}$ and $ w_{m}(\bm{\theta}_{1:M})$ is the importance weight. The usual Monte Carlo estimator is given by
\begin{align*}
\widehat{I}(f) = \frac{1}{M} \sum_{m=1}^{M} f(\bm{\theta}_{m}), 
\end{align*}
where $\bm{\theta}_{1}, \dots, \bs{\theta}_{M} \overset{\text{i.i.d.}}{\sim} [\bm{\theta} | \bm{y}_{1:n}]$. $\widetilde{I}(f)$ is often compared to $\widehat{I}(f)$ by the effective sample size, defined as
\begin{align}
\label{ess}
\text{ESS}\left( \tilde{I}(f) \right) = M\frac{\text{Var}\left( \hat{I}(f) \right)}{\text{Var}\left( \tilde{I}(f) \right)},
\end{align}
which can be interpreted as the number of i.i.d. samples from $[\bs{\theta} | \bs{y}_{1:n}]$ that would be needed to attain the same Monte Carlo error as $\tilde{I}(f)$. 

In PP-RB, posterior samples are drawn from $[\bm{\theta} \mid \bm{y}_1 ]$ at the first stage. In subsequent stages, proposals $\bm{\theta}^*$ are drawn with replacement from the posterior samples of the previous stage. Specifically, at the $m$th iteration of the $j$th stage, a proposal is accepted according to the MH acceptance ratio $\min(1,r)$ where $r$ is given by
\begin{align*}
r= \frac{[\bm{\theta}^* \mid \bs{y}_{1:j}] [\bm{\theta}^{(m-1)} \mid \bm{y}_{1:(j-1)}]}{[\bm{\theta}^{(m-1)} \mid \bs{y}_{1:j}] [\bs{\theta}^* \mid \bs{y}_{1:(j-1)}]},
\end{align*}
and $\bm{\theta}^{(m-1)}$ denote the posterior sample at the $(m-1)$th iteration. Note that $[\bm{\theta}^{(m-1)} \mid \bm{y}_{1:(j-1)}]$ is discrete uniform distribution. If we instead propose at the 
$j$th stage from
\begin{align*}
\boldsymbol{\theta}^* \sim \text{Categorical}\left(\{w_j(\boldsymbol{\theta}^{(i)})\}_{i=1}^{M},\ \{\boldsymbol{\theta}^{(i)}\}_{i=1}^{M}\right),
\end{align*}
where $\{\bm{\theta}^{(i)}\}_{i=1}^{M}$
denotes the $M$ posterior samples from the $(j-1)$th stage and
\begin{align*}
w_j(\boldsymbol{\theta}) = \frac{\tilde{w}_j(\boldsymbol{\theta})}{\sum_{i=1}^{M}\tilde{w}_j(\boldsymbol{\theta}^{(i)})}, \qquad \tilde{w}_j(\boldsymbol{\theta}) = \frac{[\boldsymbol{\theta} \mid \boldsymbol{y}_{1:j}]}{[\boldsymbol{\theta} \mid \boldsymbol{y}_{1:(j-1)}]}.
\end{align*}
With this, the MH acceptance ratio becomes $\min(1,r)$, where $r$ is
\begin{align*}
r&= \frac{[\bm{\theta}^* \mid \bs{y}_{1:j}] [\bm{\theta}^{(m-1)} \mid \bm{y}_{1:(j-1)}]}{[\bs{\theta}^{(m-1)} \mid \bm{y}_{1:j}] [\bm{\theta}^* \mid \bm{y}_{1:(j-1)}]} \cdot \frac{w_{j}(\bm{\theta}^{(m-1)})}{w_{j}(\bm{\theta}^*)}\\
&= \frac{[\bm{\theta}^* \mid \bs{y}_{1:j}] [\bm{\theta}^{(m-1)} \mid \bm{y}_{1:(j-1)}]}{[\bs{\theta}^{(m-1)} \mid \bm{y}_{1:j}] [\bm{\theta}^* \mid \bm{y}_{1:(j-1)}]} \cdot \frac{\tilde{w}_{j}(\bs{\theta}^{(m-1)})}{\tilde{w}_{j}(\bm{\theta}^*)} \\
&= \frac{[\bm{\theta}^* \mid \bs{y}_{1:j}] [\bm{\theta}^{(m-1)} \mid \bm{y}_{1:(j-1)}]}{[\bs{\theta}^{(m-1)} \mid \bm{y}_{1:j}] [\bm{\theta}^* \mid \bm{y}_{1:(j-1)}]} \cdot \frac{ \frac{[\bs{\theta}^{(m-1)} \mid \bs{y}_{1:j}]}{[\bs{\theta}^{(m-1)} \mid \bs{y}_{1:(j-1)}]} }{ \frac{[\bm{\theta}^* \mid \bs{y}_{1:j}]}{[ \bm{\theta}^* | \bs{y}_{1:(j-1)}]} } \\
&= 1.
\end{align*}
Therefore, $[\bm{\theta}]^{*}_{j} = \text{Categorical}\left(  \{ w_{j}(\bm{\theta}^{(i)}) \}_{i=1}^M, \{ \bm{\theta}^{(i)} \}_{i=1}^{M} \right)$ is the stationary distribution of the chain. This implies that, when considering the idealized scenario in which the samples from each stage are thinned enough to make the Markovian dependence negligible, PP-RB can be seen as a sequence of sampling importance resampling (SIR) steps, where the $j$th step adjusts the importance sampling weights to account for the information in $\bm{y}_j$. In particular, the samples generated from a two-stage PP-RB are equivalent to those from a SIR procedure that generates proposals $\bm{\theta}_1,...,\bm{\theta}_M\overset{\text{i.i.d.}}{\sim} [\bs{\theta} \mid \bm{y}_{1}] $ and resamples them with probability proportional to the self-normalized importance sampling (SNIS) weights $\{ w_2 (\bm{\theta}^{(m)}) \}_{m=1}^{M}$. For this reason, we can approximately assess the expected performance of the Monte Carlo estimators obtained via PP-RB by comparing them to the analogous SNIS estimators.

\subsubsection{Illustration}
\textbf{Optimal temperature based on $\chi^2-$ divergence:}
We illustrate, through a simple example, how the criterion in Equation \eqref{optimal_temp} can be applied in practice to select the optimal value of $\gamma$. Consider a set of observations divided into $J$ partitions $\bs{y}_{1}, \dots, \bs{y}_{J}$, of size $n_{1}, \dots, n_{J}$. We assume the Bayesian model
\begin{align}
\label{bayesian_model}
\bm{y}_{ji} \mid \bm{\theta} &\sim \text{Gau}( \bm{\theta}, \bm{\Sigma} ), \\
\bm{\theta} &\sim \text{Gau}( \bm{\theta}_{0}, \bm{\Sigma}_{0}) \nonumber ,
\end{align}
where the observations $\bm{y}_{ji} \in \mathbb{R}^d$ are assumed to be independent given $\bm{\theta}$ for all $j \in \{1, \dots, J\}$ and $i \in \{1, \dots, n_{j}\}$, and $\bm{\Sigma}$ is assumed to be known. In this case, the posterior distribution can be analytically derived as below. 
\begin{align*}
[\bm{\theta} \mid \bm{y}_{1:J}] &= \text{Gau}\left( \bm{\theta}_{J}, \bm{\Sigma}_{J} \right), \\
[\bm{\theta} | \bm{y}_{1}]_{\gamma} &= \text{Gau}\left( \bm{\theta}_{1,\gamma}, \bm{\Sigma}_{1,\gamma} \right),
\end{align*}
where $\bm{\Sigma}_{J} = \bm{\Sigma}_{0} + n\bm{\Sigma}$, $\bm{\theta}_{J} = \bm{\Sigma}_{J}^{-1} (\bm{\Sigma}_{0} \bm{\theta}_{0} + n \bm{\Sigma} \bar{\bm{y}}_{J})$, $\bm{\Sigma}_{1,\gamma} = \bm{\Sigma}_{0} + \gamma n_{1} \bm{\Sigma}$, $\bm{\theta}_{1,\gamma} = \bm{\Sigma}_{1, \gamma}^{-1} ( \bm{\Sigma}_{0} \bm{\theta}_{0} + \gamma n_{1} \bm{\Sigma} \bar{\bm{y}}_{1} )$, $n = \sum_{j=1}^{J} n_{j}$, $\bar{\bm{y}}_{J} = \frac{1}{n} \sum_{j=1}^{J} \sum_{i=1}^{n_{j}} \bm{y}_{ji}$, and $\bar{\bm{y}}_{1} = \frac{1}{n_{1}} \sum_{i=1}^{n_{1}} \bm{y}_{1i}$. Then, assuming vague prior for $\bm{\Sigma}_0$, it can be shown that 
\begin{align*}
D_{\chi^{2}}\Big( [\bm{\theta} \mid \bm{y}_{1:J}] \Big\| [\bs{\theta} \mid \bm{y}_{1}]_{\gamma} \Big) 
        &= \int \frac{[\bm{\theta} \mid \bm{y}_{1:J}]^{2}}{[\bm{\theta} \mid \bs{y}_{1}]_{\gamma}} d\bs{\theta} - 1 \\
        & = \frac{|\bs{\Sigma}_{J}|}{|\tilde{\bs{\Sigma}}|^{\frac{1}{2}} |\bs{\Sigma}_{1}|^{\frac{1}{2}}} \exp\left\{ \tfrac{1}{2} (\bs{\theta}_{J} - \bs{\theta}_{1,\gamma})' \left( \bs{\Sigma}_{1,\gamma}^{-1} - \tfrac{1}{2}\bs{\Sigma}_{J}^{-1} \right)^{-1} (\bs{\theta}_{J} - \bs{\theta}_{1,\gamma}) \right\} - 1,
\end{align*}
where $\tilde{\bm{\Sigma}}=2 \bm{\Sigma}_J- \bm{\Sigma}_{1,\gamma}$. Then, the optimal $\gamma^*$ is
\begin{align}
\label{optimal_gamma}
\gamma^{*} = \frac{n}{n_1} \left\{ \frac{3}{2} + \frac{1}{d} \text{tr}(S) - \frac{1}{2}\sqrt{1 + 4\left\{ \left( \frac{1}{d}\text{tr}(S) + \frac{3}{2} \right)^{2} - \left( \frac{3}{2} \right)^{2} \right\} } \right\},
\end{align}
where $\bm{S} = n \bs{\Sigma} (\bar{\bs{y}}_{J} - \bar{\bs{y}}_{1})(\bar{\bs{y}}_{J} - \bar{\bs{y}}_{1} )'$.

\textbf{Optimal temperature based on ESS:} As discussed in Section \ref{temp_opti}, the criterion in Equation \eqref{optimal_temp} is closely related to maximizing the rough approximation of the ESS. We also demonstrate how to obtain the optimal temperature from the ESS using the same example considered for the $\chi^2$ divergence. However, it is worth noting that the ESS-based optimal value of $\gamma^*$ depends on the quantity of interest (e.g., the expectation), whereas the criterion in Equation \eqref{optimal_temp} is agnostic to the specific quantity of interest.

In the case where the quantity of interest is the expectation of the observations, the optimal value $\gamma^*$ can be obtained by maximizing the ESS, namely,
\begin{align*}
       \gamma^{*} = \arg \max_{\gamma \geq 0} \text{ESS}_{\gamma}(g)
\end{align*}
where $g(\bs{x}) = \bs{x}$. Using the analytic approximation of the ESS for SNIS estimators derived with the multivariate delta method (see Section \ref{sec:delta method}), we obtain an approximation of the multivariate ESS given by
\begin{align*}
        \text{ESS}_{\gamma}(g) 
        &\approx M \left( \frac{\left| \text{Var}\Big(g(\tilde{\bs{\theta}}_{1})\Big) \right|}{ \left| \mathbb{E}\Big( \tilde{w}^{2}_{\gamma}(\bs{\theta}_{1}) \left\{ g(\bs{\theta}_{1}) - I(g) \right\}\left\{ g(\bs{\theta}_{1}) - I(g) \right\}' \Big) \right| } \right)^{\frac{1}{d}} \\
        &= M \left| \bs{\Sigma}_{J}^{-1} \left\{ \mathbb{E}\Big( \tilde{w}_{\gamma}^{2}(\bs{\theta}_{1}) \left\{ g(\bs{\theta}_{1}) - I(g) \right\}\left\{ g(\bs{\theta}_{1}) - I(g) \right\}' \Big) \right\}^{-1} \right|^{\frac{1}{d}}.
\end{align*}
Note that
\begin{align*}
&\mathbb{E}\Big( \tilde{w}^{2}(\bs{\theta}_{1}) \left\{ g(\bs{\theta}_{1}) - I(g) \right\}\left\{ g(\bs{\theta}_{1}) - I(g) \right\}' \Big) \\
        &\quad= \int \tilde{w}_{\gamma}^{2}(\bs{\theta}) \left\{ g(\bs{\theta}) - I(g) \right\}\left\{ g(\bs{\theta}_{1}) - I(g) \right\}' [\bs{\theta} | \bs{y}_{1}]_{\gamma} \ d\bs{\theta} \\
        &\quad= \int  \left\{ g(\bs{\theta}) - I(g) \right\}\left\{ g(\bs{\theta}_{1}) - I(g) \right\}' \frac{[\bs{\theta} \mid \bs{y}_{1:J}]^{2}}{ [\bs{\theta} \mid \bs{y}_{1}]_{\gamma} } \ d\bs{\theta}
\end{align*}
Also,
\begin{align*}
         \frac{[\bs{\theta} | \bs{y}_{1:J}]^{2}}{ [\bs{\theta} | \bs{y}_{1}]_{\gamma} } 
         &= \frac{(2\pi)^{-\frac{2d}{2}} |\bs{\Sigma}_{J}| \exp\left\{ -\frac{1}{2} (\bs{\theta} - \bs{\theta}_{J})' 2\bs{\Sigma}_{J} (\bs{\theta} - \bs{\theta}_{J}) \right\} }{ (2\pi)^{-\frac{d}{2}} |\bs{\Sigma}_{1,\gamma}|^{\frac{1}{2}} \exp\left\{ -\frac{1}{2} (\bs{\theta} - \bs{\theta}_{1,\gamma})' \bs{\Sigma}_{1,\gamma} (\bs{\theta} - \bs{\theta}_{1.\gamma}) \right\} } \\
         &= \underbrace{ \frac{|\bs{\Sigma}_{J}|}{|\tilde{\bs{\Sigma}}|^{\frac{1}{2}} |\bs{\Sigma}_{1}|^{\frac{1}{2}}} \exp\left\{ \tfrac{1}{2} (\bs{\theta}_{J} - \bs{\theta}_{1,\gamma})' \left( \bs{\Sigma}_{1,\gamma}^{-1} - \tfrac{1}{2}\bs{\Sigma}_{J}^{-1} \right)^{-1} (\bs{\theta}_{J} - \bs{\theta}_{1,\gamma}) \right\} }_{:= c_{\gamma}(\bs{y}_{1:J})} \\
         &\quad \times (2\pi)^{-\frac{1}{2}} | \tilde{\bs{\Sigma}} |^{\frac{1}{2}} \exp\left\{ -\tfrac{1}{2} (\bs{\theta} - \tilde{\bs{\theta}})' \tilde{\bs{\Sigma}} (\bs{\theta} - \tilde{\bs{\theta}}) \right\},
\end{align*}
where $\tilde{\bs{\Sigma}} = 2\bs{\Sigma}_{J} - \bs{\Sigma}_{1,\gamma}$ and $\tilde{\bs{\theta}} = \tilde{\bs{\Sigma}}^{-1} (2\bs{\Sigma}_{J}\bs{\theta}_{J} - \bs{\Sigma}_{1,\gamma} \bs{\theta}_{1,\gamma})$. Therefore, assuming $\tilde{\bs{\Sigma}} \succeq 0$ (positive semidefinite), we have
\begin{align*}
       &\mathbb{E}\Big( \tilde{w}^{2}(\bs{\theta}_{1}) \left\{ g(\bs{\theta}_{1}) - I(g) \right\}\left\{ g(\bs{\theta}_{1}) - I(g) \right\}' \Big) = c_{\gamma}(\bs{y}_{1:J}) \left\{ \tilde{\bs{\Sigma}}^{-1} + \{ \tilde{\bs{\theta}} - I(g) \}\{ \tilde{\bs{\theta}} - I(g) \}' \right\},
\end{align*}
which cannot be computed unless $I(g)$ is known. Replacing it with an estimate (e.g., $\bs{\theta}_{J}$), yields the estimator
\begin{align}
\label{eq:ESS delta method}
\widehat{\text{ESS}}_{\gamma}(g) = M \{c_{\gamma}(\bs{y}_{1:J})\}^{-1} \left| \bs{\Sigma}_{n} \left\{ \tilde{\bs{\Sigma}}^{-1} + (\tilde{\bs{\theta}} - \bs{\theta}_{J})(\tilde{\bs{\theta}} - \bs{\theta}_{J})' \right\} \right|^{-\frac{1}{d}},
\end{align}
which can be computed, thus allowing us to optimize for $\gamma$. It is worth noting that, assuming the prior is sufficiently vague (i.e., as $\bs{\Sigma}_{0} \to \bs{0}$), this expression simplifies to
\begin{align*}
        \widehat{\text{ESS}}_{\gamma}(g) 
        &= M \frac{|\tilde{\bs{\Sigma}}|^{\frac{1}{2}} |\bs{\Sigma}_{1}|^{\frac{1}{2}}}{|\bs{\Sigma}_{J}|} \exp\left\{ -\tfrac{1}{2} (\bs{\theta}_{J} - \bs{\theta}_{1,\gamma})' \left( \bs{\Sigma}_{1,\gamma}^{-1} - \tfrac{1}{2}\bs{\Sigma}_{J}^{-1} \right)^{-1} (\bs{\theta}_{J} - \bs{\theta}_{1,\gamma}) \right\} \\
        &\quad\times \left| \bs{\Sigma}_{J} \left\{ \tilde{\bs{\Sigma}}^{-1} + (\tilde{\bs{\theta}} - \bs{\theta}_{J})(\tilde{\bs{\theta}} - \bs{\theta}_{J})' \right\} \right|^{-\frac{1}{d}} \\
        &\to M \frac{|(2n-\gamma n_{1})\bs{\Sigma}|^{\frac{1}{2}} |n_{1}\bs{\Sigma}|^{\frac{1}{2}}}{|n\bs{\Sigma}|} \exp\left\{ -\tfrac{1}{2} (\bar{\bs{y}}_{J} - \bar{\bs{y}}_{1})' \left( \left\{\tfrac{1}{\gamma n_{1}} - \tfrac{1}{2n} \right\}\bs{\Sigma}^{-1} \right)^{-1} (\bar{\bs{y}}_{J} - \bar{\bs{y}}_{1}) \right\} \\
        &\quad\times \left| n\bs{\Sigma} \left\{ \left\{ 2n\bs{\Sigma} - \gamma n_{1} \bs{\Sigma} \right\}^{-1} + \left(\frac{\gamma n_{1}}{2n - \gamma n_{1}}\right)^{2}(\bar{\bs{y}}_{J} - \bar{\bs{y}}_{1})(\bar{\bs{y}}_{J} - \bar{\bs{y}}_{1})' \right\} \right|^{-\frac{1}{d}} \\
        &= M \left\{ \frac{\gamma n_{1} (2n - \gamma n_{1})}{n^{2}} \right\}^{\frac{d}{2}} \exp\left\{ -\tfrac{\gamma n_{1} n}{2n-\gamma n_{1}} (\bar{\bs{y}}_{J} - \bar{\bs{y}}_{1})' \bs{\Sigma} (\bar{\bs{y}}_{J} - \bar{\bs{y}}_{1}) \right\} \\
        &\quad\times \left| \tfrac{n}{2n-\gamma n_{1}} I_{d} + \left( \tfrac{\gamma n_{1}}{2n-\gamma n_{1}} \right)^{2} n\bs{\Sigma}(\bar{\bs{y}}_{J} - \bar{\bs{y}}_{1})(\bar{\bs{y}}_{J} - \bar{\bs{y}}_{1} )' \right|^{-\frac{1}{d}} \\
        &= \left\{ \gamma \alpha (2-\gamma \alpha) \right\}^{\frac{d}{2}} \exp\left\{ -\tfrac{\gamma \alpha}{2-\gamma \alpha} \text{tr}(\bm{S}) \right\} \left| (2-\gamma \alpha)^{-1} I_{d} + \left( \tfrac{\gamma \alpha}{2-\gamma \alpha} \right)^{2} \bm{S} \right|^{-\frac{1}{d}},
\end{align*}
where $\alpha = \frac{n_{1}}{n}$ is the proportion of sample information used at the first stage and $\bm{S} = n \bs{\Sigma} (\bar{\bs{y}}_{J} - \bar{\bs{y}}_{1})(\bar{\bs{y}}_{J} - \bar{\bs{y}}_{1} )'$ quantifies the different between the posterior mean of the (first stage) partial and full posterior distributions. In this case, only the sufficient statistic $\bm{S}$ is needed to perform the optimization, which can be carried out using any root-finding algorithm obeying the constraint $\gamma \in (0, \frac{2}{\alpha})$. 

\textbf{Simulation:} Through simulation, we show that the optimal $\gamma^*$ obtained via \eqref{optimal_gamma} yields the highest ESS among all considered values of $\gamma$. Furthermore, we show that the ESS obtained from both PP-RB and SNIS estimators agree up to Monte Carlo error, corroborating the asymptotic equivalence between the two approaches established in Section \ref{snis_pprb}.

We generated samples $y_{1}, \dots, y_{n} \sim \text{Normal}(\theta, \sigma^{2})$, with $n = 200$, $\theta = 2$, and $\sigma^{2} = 5$, and assumed the Bayesian model in (\ref{bayesian_model}), with $\theta_{0} = 0$, and $\sigma^{2}_{0} = \Sigma_{0}^{-1} = 10^{4}$. Figure \ref{fig:Powered Posterior} shows the density of $[\theta \mid \bm{y} ]$, $[\theta \mid \bm{y}_1 ]$, and $[\theta \mid \bm{y}_1 ]_{\gamma}$ with with $n_{1} = 40$ and $\gamma = 0.1863$, where $\gamma$ was obtained according to (\ref{optimal_gamma}). Given the significant mean shift from partial to full posterior distribution, the $\chi^{2}$-optimal power is less than 1 so that the tail of the powered posterior better covers the region of high full posterior density. Other simulations, omitted for brevity, show that for small values of the mean shift, the optimal power can be greater than 1.

We fit the Bayesian model to the same simulated data to estimate $\theta$ using the SNIS and a two-stage PP-RB estimators. For each of the $K=80$ values of $\gamma$ equally spaced in $(0, 1.5)$, we used the $[\theta \mid \bm{y}_1]$ with $n_1 = 40$ as the proposal distribution. For each proposal, we generated $M=10,000$ samples and computed both the SNIS estimate of $\theta$ and the PP-RB estimate, where the latter was obtained using 
$100\cdot M$ MCMC iterations at the second stage.

To empirically estimate the ESS of each estimator, we independently replicated the experiment $R=300$ times for each value of $\gamma$ and estimated the variance of the resulting estimators. Figure \ref{fig:Optimal Power} shows the ESS of both estimators as a function of $\gamma$, along with the theoretical ESS obtained via the delta method. As expected from the asymptotic equivalence between PP-RB and SNIS, all three ESS estimates are in close agreement, corroborating our theoretical analysis.

\begin{figure}[t]
    \centering
    \includegraphics[width=0.95\textwidth]{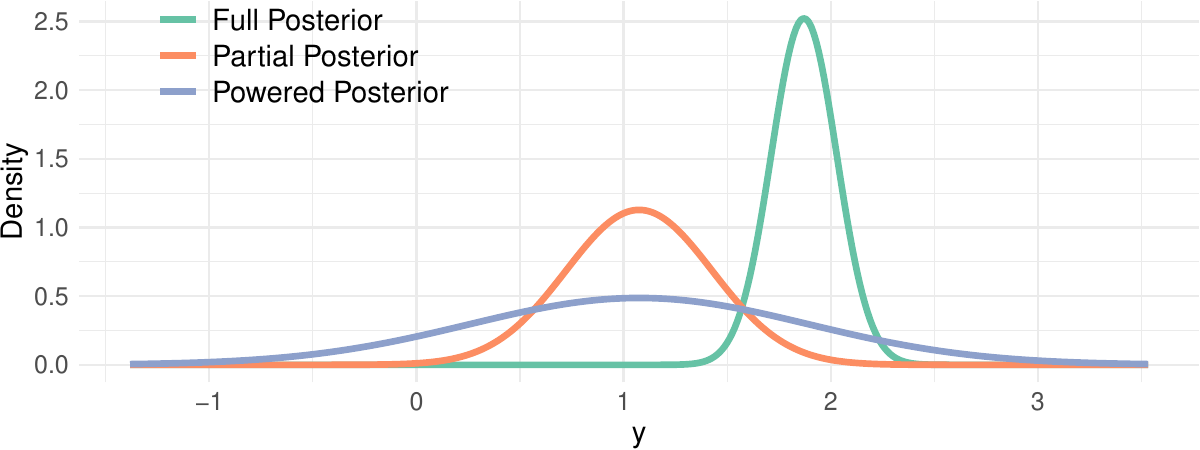}
    \caption{Comparison of the densities functions of the full posterior distribution, partial posterior distribution, and powered partial posterior distribution calculated for the simulated data.}
    \label{fig:Powered Posterior}
\end{figure}
\begin{figure}[t]
    \centering
    \includegraphics[width=0.95\textwidth]{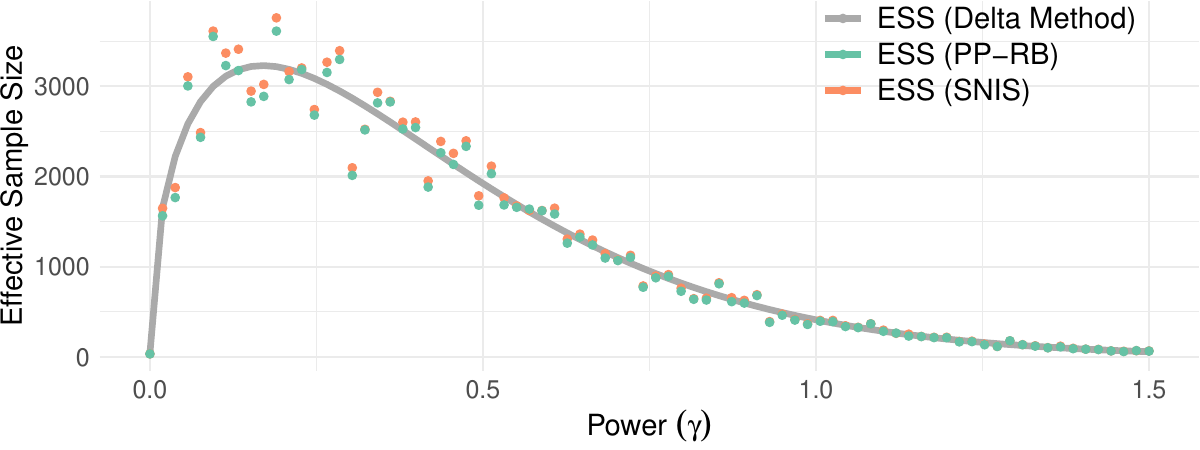}
    \caption{ESS estimated using $R = 300$ independent replicates of the SNIS and PP-RB estimators obtained using $K = 80$ different powered posterior distributions and the proposal. Solid line shows the ESS estimate obtained via the delta method.}
    \label{fig:Optimal Power}
\end{figure}

\subsubsection{Delta Method Approximation of the SNIS Estimator Variance}\label{sec:delta method}

We wish to calculate the ESS of the SNIS estimator for a given function of interest $f$, which we can write as
\begin{equation}\label{eq:ESS SNIS}
    \begin{aligned}
        \text{ESS}(f) 
        = M \frac{\text{Var}\left( \hat{I}(f) \right)}{\text{Var}\left( \tilde{I}_{\text{SNIS}}(f) \right)}
        = M \frac{ \text{Var}\left( \frac{1}{M} \sum_{m=1}^{M} f(\tilde{\bs{\theta}}_{i}) \right) }{ \text{Var}\left( \sum_{m=1}^{M} w(\bs{\theta}_{i}) f(\bs{\theta}_{i}) \right) },
    \end{aligned}
\end{equation}
where $\tilde{\bs{\theta}}_{1}, \dots, \tilde{\bs{\theta}}_{M} \overset{\text{i.i.d.}}{\sim} [\bs{\theta} | \bs{y}_{1:n}]$ (the target distribution) and $\bs{\theta}_{1}, \dots, \bs{\theta}_{M} \overset{\text{i.i.d.}}{\sim} [\bs{\theta}]_{\text{prop}}$, and
\begin{align*}
       w(\bs{\theta}) &= \frac{\tilde{w}(\bs{\theta})}{\sum_{j=1}^{M} \tilde{w}(\bs{\theta})}, \quad \text{with } \tilde{w}(\bs{\theta}) = \frac{[\bs{\theta} | \bs{y}_{1:n}]}{[\bs{\theta}]_{\text{prop}}}.
\end{align*}
Under the i.i.d. setting, the numerator in (\ref{eq:ESS SNIS}) can always be expressed as
\begin{align*}
        \text{Var}\left( \frac{1}{M} \sum_{m=1}^{M} f(\tilde{\bs{\theta}}_{i}) \right) 
        \overset{\text{ind.}}{=} \frac{1}{M^{2}} \sum_{m=1}^{M} \text{Var}\left( f(\tilde{\bs{\theta}}_{m}) \right) 
        \overset{\text{i.d.}}{=} \frac{1}{M} \text{Var}\left( f(\tilde{\bs{\theta}}_{1}) \right).
\end{align*}
However, even though the SNIS estimator also assumes i.i.d. samples, no such simple form is available for the denominator given that the SNIS weights depend on all sampled quantities. Note however, that we can rewrite \citep{kong1992note}
\begin{align*}
        \tilde{I}_{\text{SNIS}}(f) 
        &= \sum_{m=1}^{M} w(\bs{\theta}_{m}) f(\bs{\theta}_{m})
        = \frac{\frac{1}{M} \sum_{m=1}^{M} \tilde{w}(\bs{\theta}_{m}) f(\bs{\theta}_{m}) }{ \frac{1}{M} \sum_{m=1}^{M} \tilde{w}(\bs{\theta}_{m}) } \\
        &= h\left( \frac{1}{M} \sum_{m=1}^{M} \tilde{w}(\bs{\theta}_{m}) f(\bs{\theta}_{m}) , \frac{1}{M} \sum_{m=1}^{M} \tilde{w}(\bs{\theta}_{m}) \right) = h\left( \tilde{I}_{\text{IS}}(f), \tilde{I}_{\text{IS}}(1) \right),
\end{align*}
where $h(x, y) = \frac{x}{y}$ and $\tilde{I}_{\text{IS}}(f) = \frac{1}{M} \sum_{m=1}^{M} \tilde{w}(\bs{\theta}_{m}) f(\bs{\theta}_{m})$, and by the multivariate central limit theorem for i.i.d. random variables we have
\begin{align*}
        \sqrt{M} \left( \left[ \begin{array}{c} \tilde{I}_{\text{IS}}(f)  \\ \tilde{I}_{\text{IS}}(1) \end{array} \right] - \bs{\mu} \right) \overset{d}{\to} \text{Normal}_{2}\left( \bs{0}, \bs{\Sigma} \right),
\end{align*}
where 
\begin{align*}
        \bs{\mu} &= \mathbb{E}\left[ \begin{array}{c} \tilde{w}(\bs{\theta}_{1}) f(\bs{\theta}_{1}) \\ \tilde{w}(\bs{\theta}_{1})  \end{array} \right] = \left[ \begin{array}{c} I(f) \\ 1 \end{array} \right], &
        \bs{\Sigma} &= \text{Var}\left( \left[ \begin{array}{cc} \tilde{w}(\bs{\theta}_{1}) f(\bs{\theta}_{1}) \\ \tilde{w}(\bs{\theta}_{1}) \end{array} \right] \right),
\end{align*}
assuming that all entries of $\bs{\mu}$ and $\bs{\Sigma}$ exist and are finite. Applying the delta method yields
\begin{align*}
        \sqrt{M} \left(h\left( \tilde{I}_{\text{IS}}(f) , \tilde{I}_{\text{IS}}(1) \right) - h(\bs{\mu}) \right) \overset{d}{\to} \text{Normal}\left( 0, \left\{ \nabla h(\bs{\mu}) \right\}^{'} \bs{\Sigma} \left\{ \nabla h(\bs{\mu}) \right\} \right),
\end{align*}
resulting in the asymptotic approximation
\begin{align*}
        &\text{Var}\left( \tilde{I}_{\text{SNIS}}(f) \right) 
        \approx \frac{1}{M} \left\{ \nabla h(\bs{\mu}) \right\}^{'} \bs{\Sigma} \left\{ \nabla h(\bs{\mu}) \right\} 
        = \frac{1}{M} \left[ \begin{array}{c} 1 \\ -I(f) \end{array} \right]' \bs{\Sigma} \left[ \begin{array}{c} 1 \\ -I(f) \end{array} \right] \\
        &= \frac{1}{M} \bigg\{ \text{Var}\Big( \tilde{w}(\bs{\theta}_{1}) f(\bs{\theta}_{1}) \Big) - 2\{ I(f) \} \text{Cov}\Big(\tilde{w}(\bs{\theta}_{1}) f(\bs{\theta}_{1}), \tilde{w}(\bs{\theta}_{1}) \Big) + \left\{I(f)\right\}^{2} \text{Var}\Big(\tilde{w}(\bs{\theta}_{1}) \Big) \bigg\} \\
        &= \frac{1}{M} \bigg\{ \text{Var}\Big( \tilde{w}(\bs{\theta}_{1}) f(\bs{\theta}_{1}) \Big) - 2 \text{Cov}\Big( \tilde{w}(\bs{\theta}_{1}) f(\bs{\theta}_{1}), \tilde{w}(\bs{\theta}_{1}) I(f) \Big) + \text{Var}\Big( \tilde{w}(\bs{\theta}_{1}) I(f) \Big) \bigg\} \\
        &= \frac{1}{M} \text{Var}\Big( \tilde{w}(\bs{\theta}_{1}) f(\bs{\theta}_{1}) - \tilde{w}(\bs{\theta}_{1}) I(f) \Big)
        = \frac{1}{M} \text{Var}\Big( \tilde{w}(\bs{\theta}_{1}) \left\{ f(\bs{\theta}_{1}) - I(f) \right\} \Big) \\
        &= \frac{1}{M} \mathbb{E}\Big( \tilde{w}^{2}(\bs{\theta}_{1}) \left\{ f(\bs{\theta}_{1}) - I(f) \right\}^{2} \Big),
\end{align*}
yielding
\begin{align*}
        \text{ESS}(f) \approx M \frac{\text{Var}\Big(f(\tilde{\bs{\theta}}_{1})\Big)}{\mathbb{E}\Big( \tilde{w}^{2}(\bs{\theta}_{1}) \left\{ f(\bs{\theta}_{1}) - I(f) \right\}^{2} \Big)}.    
\end{align*}
\subsection{Proof of Corollaries in Section 3.2.}
Without loss of generality, we assume that the data are partitioned into two sets $\bm{y}=(\bm{y}_1',\bm{y}_2')'$ and consider a two-stage algorithm ($L=2$) for the proof. Also, we assume that we have two chains; $\ell=1$ (cold) and $\ell=2$ (hot).
\subsubsection{Corollary 3.1.}
\begin{proof}[Proof of Corollary 3.1.] The target distribution for the 1st stage is:
\begin{align*}
[\bm{\theta}|\bm{y}_1]_{\tau_{\ell}}\propto [\bm{y}_1 | \bm{\theta}]^{1/\tau_{\ell}} [\bm{\theta}],
\end{align*}
where $\tau_{\ell}=1$ and $\tau_{2}>1$ for cold and hot chain, respectively. Then, the detailed balance holds when:
\begin{align}
\label{detailed}
[\bm{y}_1|\bm{\theta}^{(k)}]^{1/\tau_{\ell}}  [\bm{\theta }^{(k)}] \cdot  P(\bm{\theta}^*| \bm{\theta}^{(k)})=[\bm{y}_1|\bm{\theta}^{*}]^{1/\tau_{\ell}} [\bm{\theta }^*]\cdot P(\bm{\theta}^{(k)}|\bm{\theta}^{*}),
\end{align}
where $\bm{\theta}^{(k)}$ and $\bm{\theta}^*$ are posterior sample at k${th}$ iteration and new proposals. We let $P(\bm{\theta}^*| \bm{\theta}^{(k)})$ present transition kernel from $\bm{\theta}^{(k)}$ to $\bm{\theta}^*$. We define the transition kernel as 
\begin{align}
\label{transition_kernel}
P(\bm{\theta}^*| \bm{\theta}^{(k)})&= Q(\bm{\theta}^*| \bm{\theta}^{(k)})  \alpha_{1,\ell} (\bm{\theta}^{(k)},\bm{\theta}^*) +  \left( 1- \int Q(\tilde{\bm{\theta}} | \bm{\theta}^{(k)}) \alpha_{1,\ell}(\bm{\theta}^{(k)}, \tilde{\bm{\theta}}) \, d\tilde{\bm{\theta}} \right) \, \delta(\bm{\theta}^* - \bm{\theta}^{(k)}),
\end{align}
where $\delta$ is Dirac delta function, $Q(\bm{\theta}^*| \bm{\theta}^{(k)})$  is the proposal distribution, and $\alpha_{1,\ell} (\bm{\theta}^{(k)},\bm{\theta}^*)=\min (1,r_{1,\ell}),$ where $r_{1,\ell}$ is 
\begin{align*}
r_{1,\ell}=\frac{Q(\bm{\theta}^{(k)}| \bm{\theta}^{*})\cdot   [\bm{y}_1 | \bm{\theta}^{*}]^{1/\tau_{\ell}} \cdot [\bm{\theta}^*]}{ Q(\bm{\theta}^{*}| \bm{\theta}^{(k)})\cdot [\bm{y}_1 | \bm{\theta}^{(k)}]^{1/\tau_{\ell}} \cdot [\bm{\theta}^{(k)}]}.
\end{align*}
By plugging \eqref{transition_kernel} into \eqref{detailed}, we consider the case $\bm{\theta}^* \neq \bm{\theta}^{(k)}$, 
for which the second term in \eqref{transition_kernel} vanishes. The case $\bm{\theta}^* = \bm{\theta}^{(k)}$, trivially satisfies the detailed balance. The LHS and RHS are then
\begin{align*}
\text{LHS}&=[\bm{y}_1|\bm{\theta}^{(k)}]^{1/\tau_{\ell}} [\bm{\theta }^{(k)}] \cdot Q(\bm{\theta}^*| \bm{\theta}^{(k)})\cdot   \min(1, r_{1,\ell} ) \\
\text{RHS}&=[\bm{y}_1|\bm{\theta}^{*}]^{1/\tau_{\ell}} [\bm{\theta }^*] \cdot Q(\bm{\theta}^{(k)}| \bm{\theta}^*) \cdot   \min(1, 1/r_{1,\ell} ).
\end{align*}
\textbf{Case 1: $r_{1,\ell} \leq 1$}
\begin{align*}
\text{LHS: }&= [\bm{y}_1|\bm{\theta}^{(k)}]^{1/\tau_{\ell}} [\bm{\theta }^{(k)}]\cdot Q(\bm{\theta}^*| \bm{\theta}^{(k)}) \cdot \frac{ Q(\bm{\theta}^{(k)}| \bm{\theta}^{*})[\bm{y}_1|\bm{\theta}^{*} ]^{1/\tau_{\ell}}  [\bm{\theta}^*]}{ Q(\bm{\theta}^*| \bm{\theta}^{(k)})[\bm{y}_1|\bm{\theta}^{(k)} ]^{1/\tau_{\ell}} [\bm{\theta}^{(k)}]} \\
&=[\bm{y}_1|\bm{\theta}^{*}]^{1/\tau_{\ell}} [\bm{\theta }^{*}]\cdot Q(\bm{\theta}^{(k)}| \bm{\theta}^{*}) \\
\text{RHS: }&=[\bm{y}_1|\bm{\theta}^{*}]^{1/\tau_{\ell}} [\bm{\theta }^{*}]\cdot Q(\bm{\theta}^{(k)}| \bm{\theta}^{*}) \cdot 1 
\end{align*}
Therefore, the detailed balance condition holds for case 1. We can show it still holds for \textbf{Case 2: $r_{1,\ell}>1$} similarly.
\end{proof}

\subsubsection{Corollary 3.2.}
\begin{proof}[Proof of Corollary 3.2.]
We first show that detailed balance holds for the within-chain update of each chain. 
For the within-chain update in second stage, the MH acceptance ratio for the $\ell$th chain at the $k$th MCMC iteration is 
\begin{align*}
r_{2,\ell}=\frac{ [\bm{y}_2| \bm{\theta}^*, \bm{y}_{1}]^{1/\tau_{\ell}}  }{  [\bm{y}_2| \bm{\theta}^{(k)}, \bm{y}_{1}]^{1/\tau_{\ell}} },
\end{align*}
where the proposal distribution is $[\bm{\theta} | \bm{y}_{1}]_{\tau_{\ell}} \propto [\bm{\theta}][\bm{y}_1|\bm{\theta}]^{1/\tau_{\ell}}$. Therefore, the transition kernel is 
\begin{align}
\label{transition_2}
P(\bm{\theta}^* | \bm{\theta}^{(k)}) = [\bm{\theta}^{*} | \bm{y}_{1}]_{\tau_{\ell}} \cdot \alpha_{2,\ell}(\bm{\theta}^{(k)},\bm{\theta}^*) 
+ \left( 1 - \int [\tilde{\bm{\theta}} | \bs{y}_{1} ]_{\tau_{\ell}} \cdot \alpha_{2,\ell}(\bm{\theta}^{(k)}, \tilde{\bm{\theta}} ) \, d\tilde{\bm{\theta}} \right) \, \delta(\bm{\theta}^* - \bm{\theta}^{(k)}),
\end{align}
where $\alpha_{2,\ell}(\bm{\theta}^{(k)},\bm{\theta}^*)=\min(1, r_{2,\ell})$. Note that the detailed balance holds when:
\begin{align}
\label{detailed2}
[\bm{y}_2,\bm{y}_1|\bm{\theta}^{(k)}]^{1/\tau_{\ell}} [\bm{\theta }^{(k)}] \cdot  P(\bm{\theta}^*| \bm{\theta}^{(k)})=[\bm{y}_2,\bm{y}_1|\bm{\theta}^{*}]^{1/\tau_{\ell}}  [\bm{\theta }^*]\cdot P(\bm{\theta}^{(k)}|\bm{\theta}^{*}).
\end{align}
By plugging \eqref{transition_2} into \eqref{detailed2}, the LHS and RHS are
\begin{align*}
\text{LHS: }&=[\bm{y}_2|\bm{y}_1, \bm{\theta}^{(k)}]^{1/\tau_{\ell}} [\bm{y}_1 | \bm{\theta}^{(k)}]^{1/\tau_{\ell}} [\bm{\theta}^{(k)}] [\bm{y}_1|\bm{\theta}^*]^{1/\tau_{\ell}}[\bm{\theta}^*] \cdot \min(1,r_{2,\ell})\\
\text{RHS: }&=[\bm{y}_2|\bm{y}_1, \bm{\theta}^{*}]^{1/\tau_{\ell}} [\bm{y}_1 | \bm{\theta}^*]^{1/\tau_{\ell}}  [\bm{\theta}^{*}]  [\bm{y}_1|\bm{\theta}^{(k)}]^{1/\tau_{\ell}} [\bm{\theta}^{(k)}] \cdot \min(1,1/r_{2,\ell}).
\end{align*}
\textbf{Case 1: $r_{2,\ell} \leq 1$}
\begin{align*}
\text{LHS: }&= [\bm{y}_2|\bm{y}_1, \bm{\theta}^{(k)}]^{1/\tau_{\ell}} [\bm{y}_1 | \bm{\theta}^{(k)}]^{1/\tau_{\ell}}  [\bm{\theta}^{(k)}]   [\bm{y}_1|\bm{\theta}^*]^{1/\tau_{\ell}} [\bm{\theta}^*]\frac{[\bm{y}_2 |\bm{\theta}^*,\bm{y}_1]^{1/\tau_{\ell}}}{[\bm{y}_2 |\bm{\theta}^{(k)},\bm{y}_1]^{1/\tau_{\ell}}}\\
&= [\bm{y}_1 | \bm{\theta}^{(k)}]^{1/\tau_{\ell}} [\bm{\theta}^{(k)}]   [\bm{y}_1|\bm{\theta}^*]^{1/\tau_{\ell}} [\bm{\theta}^*] [\bm{y}_2 |\bm{\theta}^*,\bm{y}_1]^{1/\tau_{\ell}} \\
\text{RHS: }&=[\bm{y}_2|\bm{y}_1, \bm{\theta}^{*}]^{1/\tau_{\ell}}  [\bm{y}_1 | \bm{\theta}^*]^{1/\tau_{\ell}}  [\bm{\theta}^{*}]  [\bm{y}_1|\bm{\theta}^{(k)}]^{1/\tau_{\ell}} [\bm{\theta}^{(k)}] \cdot 1
\end{align*}
\textbf{Case 2:} $r_{2,\ell} > 1$ can be checked similarly. Therefore, the detailed balance holds for the within-chain update.

To show that the detailed balance still holds for the between-chain exchange, we first define the joint tempered distribution as:
\begin{align*}
\Pi(\bm{\Theta})=\prod_{\ell=1}^{2} [\bm{\theta}|\bm{y}]_{\tau_{\ell}},
\end{align*}
where each chain targets the tempered distribution as:
\begin{align*}
[\bm{\theta}|\bm{y}]_{\tau_{\ell}} \propto [\bm{y}|\bm{\theta}]^{1/\tau_{\ell}} [\bm{\theta}].
\end{align*}
For a between-chain exchange between cold and hot chains, we define the total Markov chain as:
\begin{align*}
\bm{\Theta}= (\bm{\theta}_{\text{cold}} , \bm{\theta}_{\text{hot}}),    
\end{align*}
where $\bm{\theta}_{\text{cold}}$ and $\bm{\theta}_{\text{hot}}$ are parameters for cold and hot chains, respectively. Then, consider
\begin{align*}
\bm{\Theta}&= (\bm{\theta}_{\text{cold}},\bm{\theta}_{\text{hot}})\\
\bm{\Theta}^*&= (\bm{\theta}_{\text{hot}},\bm{\theta}_{\text{cold}}),
\end{align*}
where $\bm{\theta}_{\text{cold}}$ and $\bm{\theta}_{\text{hot}}$ are swapped. The transition kernel is given by: 
\begin{align}
\label{transition3}
P(\bm{\Theta}^*|\bm{\Theta}) = Q(\bm{\Theta}^*|\bm{\Theta})\,\alpha_{swap}(\bm{\Theta},\bm{\Theta}^*) 
+ \left( 1 - \int Q(\tilde{\bm{\Theta}} | \bm{\Theta}) \, \alpha_{swap}(\bm{\Theta}, \tilde{\bm{\Theta}}) \, d\tilde{\bm{\Theta}} \right)\,\delta(\bm{\Theta}^*-\bm{\Theta}),
\end{align}
where $Q$ is symmetric, $\alpha_{swap}(\bm{\Theta},\bm{\Theta^*})=\text{min}(1,R)$, and $\alpha_{swap}(\bm{\Theta^*},\bm{\Theta})=\text{min}(1,1/R)$. $R$ is given by
\begin{align*}
R&=\frac{\Pi(\bm{\Theta}^*)}{\Pi(\bm{\Theta})}.
\end{align*}
The detailed balance condition requires 
\begin{align}
\label{detailed3}
\Pi(\bm{\Theta})P(\bm{\Theta}^*|\bm{\Theta})=\Pi(\bm{\Theta}^*)P(\bm{\Theta}|\bm{\Theta}^*)
\end{align}
By plugging \eqref{transition3} into \eqref{detailed3}, the LHS and RHS are
\begin{align*}
\text{LHS} =\Pi(\bm{\Theta})\alpha_{swap}(\bm{\Theta},\bm{\Theta^*}) \\
\text{RHS} =\Pi(\bm{\Theta})\alpha_{swap}(\bm{\Theta^*},\bm{\Theta}) 
\end{align*}
\textbf{Case 1: }$R>1$, then $\alpha_{swap}(\bm{\Theta}^*,\bm{\Theta})=1$ and 
\begin{align*}
\text{LHS: }&\Pi(\bm{\Theta}) \cdot 1 \\
\text{RHS: }&\Pi(\bm{\Theta}^*) \cdot \frac{1}{R}=\Pi (\bm{\Theta})
\end{align*}
Similarly, it holds for case 2: $R\leq 1$. Therefore, the detailed balance condition holds for the between-chain update. 
\end{proof}

\subsubsection{Corollary 3.3.}
\begin{proof}[Proof of Corollary 3.3.]
Marginalizing out the hot chain gives
\begin{align*}
\int \mathbf{\Pi}(\bm{\Theta}) d\bm{\theta}_2
&= [\bm{\theta}|\bm{y}]_{\tau_1}\int [\bm{\theta}|\bm{y}]_{\tau_{2}} d\bm{\theta}_2\\
&=[\bm{\theta}|\bm{y}]_{\tau_1}\\
&= [\bm{\theta}|\bm{y}]^{1/\tau_1} [\bm{\theta}]\\
&\propto [\bm{\theta}|\bm{y}] \quad (\because \tau_1=1).
\end{align*}    
\end{proof}

\subsection{1989 Loma Prieta Earthquake analysis with two partitions}
\begin{figure}[h]
\centering
\includegraphics[width=1\linewidth]{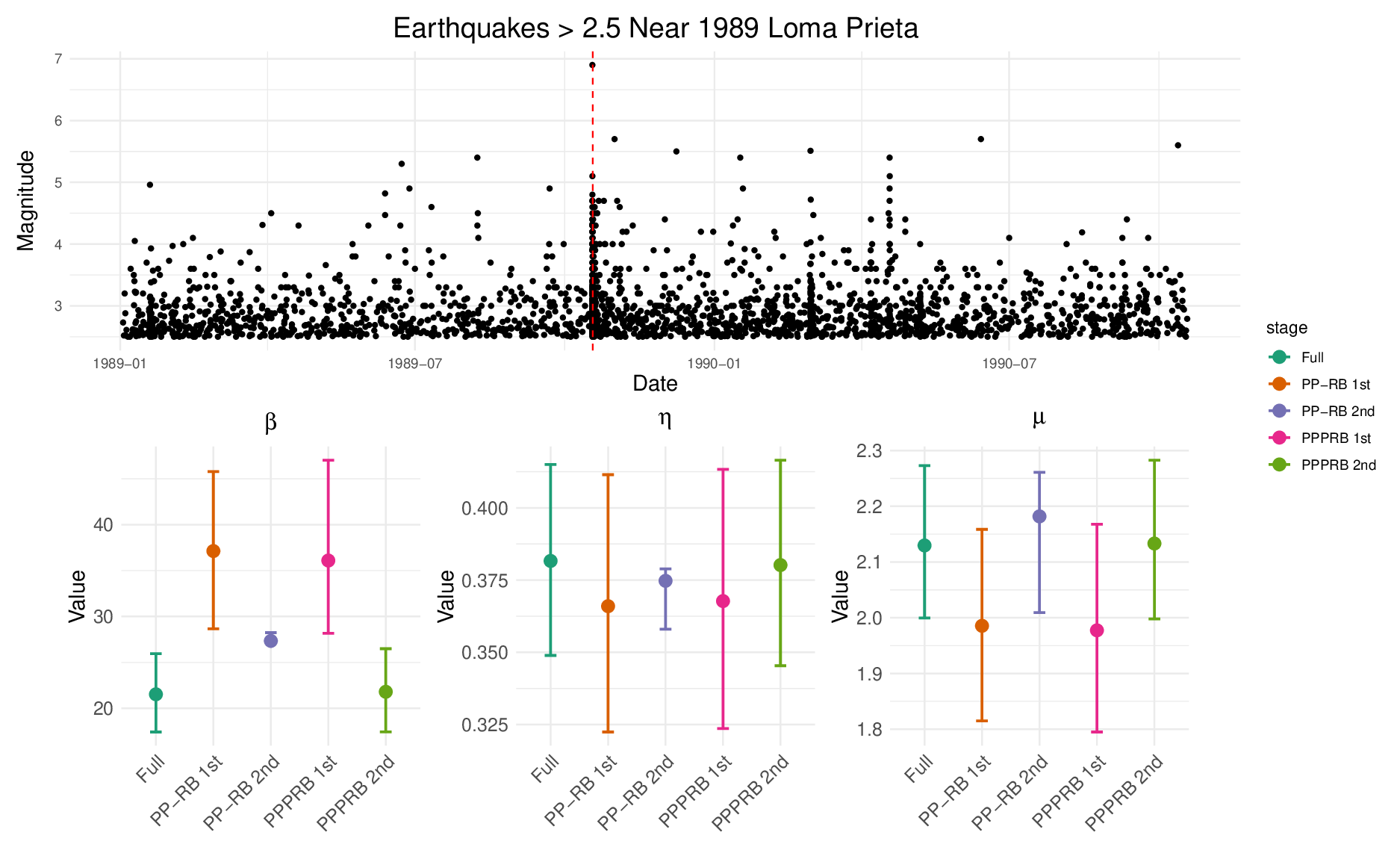}
\caption{Top: Time series plot of earthquake data. Orange dashed vertical lines indicate the time points at which the data are partitioned. Bottom left, bottom middle and bottom right: Posterior summaries comparison for standard MCMC (full data), PP-RB, and PPP-RB, respectively.}
\label{fig_hawkes2}
\end{figure}
We compare the PP-RB and PPP-RB using the Earthquake dataset, where the data are partitioned into two subsets based on the occurrence of the 1989 Loma Prieta earthquake (Figure \ref{fig_hawkes2}). In this case, the posterior distribution obtained by PP-RB differs substantially from that obtained by the MH algorithm. In contrast, the posterior distribution obtained by PPP-RB remains close to that of the MH algorithm, demonstrating its robustness.

\end{document}